\numberwithin{equation}{section}
\newtheorem{thm}[equation]{Theorem}
\newtheorem{lem}[equation]{Lemma}
\newtheorem{prop}[equation]{Proposition}
\newtheorem{cor}[equation]{Corollary}
\theoremstyle{definition}
\newtheorem{defn}[equation]{Definition}
\theoremstyle{remark}
\newtheorem{rem}[equation]{Remark}
\newtheorem{question}[equation]{Question}
\crefname{thm}{Theorem}{Theorems}
\DeclarePairedDelimiter{\set}{\{}{\}}
\def\instring#1#2{TT\fi\begingroup
  \edef\x{\endgroup\noexpand\in@{#1}{#2}}\x\ifin@}
\def\isuppercase#1{%
  \instring{#1}{ABCDEFGHIJKLMNOPQRSTUVWXYZ}%
}%
\newcommand{\C@lIfUpper}[1]{
 \if\isuppercase{#1}\mathscr{#1}%
 \else #1%
 \fi
}
\newcommand{\cat}[1]{\mathit{\@tfor\next:=#1\do{\C@lIfUpper{\next}}}}
\newcommand{\Z}{\mathbb Z}
\newcommand{\bZ}{\mathbb Z}
\newcommand{\C}{\mathbb C}
\newcommand{\U}{\mathrm U}
\newcommand{\Spin}{\mathrm{Spin}}
\newcommand{\Sq}{\mathrm{Sq}}
\newcommand{\abs}[1]{\lvert #1 \rvert}
\newcommand{\pt}{\mathrm{pt}}
\newcommand{\SU}{\mathrm{SU}}
\newcommand{\SO}{\mathrm{SO}}
\newcommand{\SL}{\mathrm{SL}}
\newcommand{\Mp}{\mathrm{Mp}}
\renewcommand{\O}{\mathrm O}
\newcommand{\RP}{\mathbb{RP}}
\newcommand{\bl}{\text{--}}
\newcommand{\R}{\mathbb R}
\newcommand{\su}{\mathfrak{su}}
\newcommand{\cN}{\mathcal N}
\newcommand{\Q}{\mathbb Q}
\newcommand{\term}{\emph}
\newcommand{\cA}{\mathcal A}
\newcommand{\Ext}{\mathrm{Ext}}
\newcommand{\Hom}{\mathrm{Hom}}
\newcommand{\vp}{\varphi}
\newcommand{\MTH}{\mathit{MTH}}
\newcommand{\MTSO}{\mathit{MTSO}}
\newcommand\MAILTO[1]{\href{mailto:#1}{\nolinkurl{#1}}}
\DeclareDocumentCommand{\shortexact}{s O{} O{} mmmm}{
\IfBooleanTF{#1}{ 
\begin{tikzcd}[ampersand replacement=\&]
	{1} \& {#4} \& {#5} \& {#6} \& {1#7}
	\arrow[from=1-1, to=1-2]
	\arrow["#2", from=1-2, to=1-3]
	\arrow["#3", from=1-3, to=1-4]
	\arrow[from=1-4, to=1-5]
\end{tikzcd}
}{ 
\begin{tikzcd}[ampersand replacement=\&]
	{0} \& {#4} \& {#5} \& {#6} \& {0#7}
	\arrow[from=1-1, to=1-2]
	\arrow["#2", from=1-2, to=1-3]
	\arrow["#3", from=1-3, to=1-4]
	\arrow[from=1-4, to=1-5]
\end{tikzcd}
}}
\newcommand{\Mzero}[3]{
	\tikzpt{#1}{#2}{#3}{};
	\foreach \y in {2, 3, 5} {
		\tikzpt{#1}{#2 + \y}{}{};
	}
	\sqone(#1, #2 + 2);
	\sqtwoL(#1, #2);
	\sqtwoL(#1, #2 + 3);
}
\title{What bordism-theoretic anomaly cancellation can do for U}
\date{\today}
\author{Arun Debray}
\address{Purdue University, West Lafayette, Indiana}
\email{adebray@purdue.edu}
\author{Matthew Yu}
\address{Perimeter Institute for Theoretical Physics, Waterloo, Ontario}
\email{myu@perimeterinstitute.ca}
\thanks{It is a pleasure to thank Diego Delmastro, Markus Dierigl, Dan Freed, Jonathan J. Heckman, Theo Johnson-Freyd, Miguel Montero, and David Speyer for helpful comments. We also thank the Simons Collaboration for Global Categorical Symmetries for hosting a summer school where some of this work was conducted. Research at the Perimeter Institute is supported by the Government of Canada through Industry Canada and by the Province of Ontario through the Ministry of Economic Development and Innovation. 
}
\begin{document}
\maketitle

\begin{abstract}
We perform a bordism computation to show that the $E_{7(7)}(\R)$ U-duality symmetry of 4d $\mathcal N = 8$
supergravity could have an anomaly invisible to perturbative methods; then we show that this anomaly is trivial. We
compute the relevant bordism group using the Adams and Atiyah-Hirzebruch spectral sequences, and we show the
anomaly vanishes by computing $\eta$-invariants on the Wu manifold, which generates the bordism group.

\end{abstract}

\tableofcontents

\section{Introduction}

One of the most surprising discoveries in the field of string theory is the existence of duality symmetries. These symmetries show that the same theory can be described in superficially different ways. In some cases, this can be seen via a transformation of the parameters of the theory, or even the spacetime itself. One such symmetry is U-duality, given by the group $E_{n(n)}(\bZ)$. By starting with an 11-dimensional theory which encompasses the type IIA string theory, and compactifying on an $n$-torus, we gain an $\mathrm{SL}_n(\bZ)$ symmetry from the mapping class group on the $n$-torus. We arrive at the same theory by compactifying 10d type IIB on an $n-1$-torus, and obtain an $\O_{n-1,n-1}(\bZ)$ symmetry related to T-duality. The group $E_{n(n)}(\bZ)$ is then generated by the two aforementioned groups.

In the low energy regime of the 11d theory, which is 11d supergravity, we have an embedding of  $E_{n(n)}(\bZ)\hookrightarrow E_{n(n)}$ upon applying the torus compactification procedure. The latter group is the U-duality of supergravity. One finds a
maximally noncompact form of $E_n$ after the compactification, and this is denoted $E_{n(n)}(\mathbb{R})$. The maximally noncompact
form of a Lie group of rank $n$ contains $n$ more noncompact generators than
compact generators. 
For the purpose of this paper, we reduce 11-dimensional supergravity on a 7-dimensional torus. This gives a maximal supergravity theory, i.e. 4d $\mathcal{N}=8$ supergravity, with an $E_7$ symmetry.\footnote{
Dimensional reduction of IIB supergravity on an 6-dimensional torus also yields the same symmetry.} The noncompact form is $E_{7(7)}$ which is $133$-dimensional and is diffeomorphic, but not isomorphic, to	$\SU_8/\set{\pm 1}\times\R^{70}$.

Because this is a symmetry of the theory, one can ask if it is anomalous, and in particular if there are any global anomalies. Since 4d $\mathcal{N}=8$ supergravity arises as the low energy effective theory of string theory, then a strong theorem of quantum gravity saying that there are no global symmetries implies that the U-duality symmetry must be gaugeable.
Therefore,
the existence of any global anomaly would require a mechanism for its cancellation. It would therefore be an interesting question to consider if additional topological terms  need to be added to cancel the nonperturbative anomaly as in \cite{DDHM21}, but 
we show that with the matter content of 4d maximal supergravity is sufficient to cancel the anomaly on the nose.

The purpose of this paper is to answer: 
\begin{question}
Can 4d $\mathcal{N}=8$ supergravity with an $E_{7(7)}$ symmetry have a nontrivial anomaly topological field theory (TFT)? If it can, how do we show that the anomaly vanishes?
\end{question}
We find that theories with this symmetry type can have a nontrivial anomaly, so we have to check whether 4d $\cN = 8$ supergravity carries this nontrivial anomaly.
\begin{thm}
The group of deformation classes of 5d reflection-positive, invertible TFTs on spin-$\SU_8$ manifolds is isomorphic to $\Z/2$. In this group, the anomaly field theory of 4d $\cN = 8$ supergravity is trivial.
\end{thm}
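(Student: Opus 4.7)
The plan is to split the theorem into two parts: an algebraic topology computation identifying the group of deformation classes, followed by a physics calculation showing that the anomaly class vanishes inside it.

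For the first part, by the Freed--Hopkins classification, deformation classes of 5d reflection-positive invertible TFTs with a given tangential structure are identified with the degree-6 elements of the Anderson dual of the corresponding bordism spectrum. The relevant tangential structure $\Spin\text{-}\SU_8$ is $(\Spin\times \SU_8)/(\Z/2)$, where the diagonal $\Z/2$ is generated by fermion parity and the order-two element of the center of $\SU_8$; this structure is forced because every fermion in 4d $\cN=8$ supergravity transforms in a representation on which the central element $-I\in \SU_8$ acts by $-1$ (the gravitini in $\mathbf{8}$ and the dilatini in $\Lambda^3\mathbf{8}\cong \mathbf{56}$). Anderson duality produces a short exact sequence
\[
0 \to \Ext^1(\Omega_5^{\Spin\text{-}\SU_8},\Z) \to (I_{\Z}\Omega^{\Spin\text{-}\SU_8})^6 \to \Hom(\Omega_6^{\Spin\text{-}\SU_8},\Z) \to 0,
\]
so it suffices to show that $\Omega_5^{\Spin\text{-}\SU_8}\cong \Z/2$ and that $\Omega_6^{\Spin\text{-}\SU_8}$ has no free summand.

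I would compute these two bordism groups via the Adams spectral sequence applied to the Thom spectrum $MT(\Spin\text{-}\SU_8)$, running through the sixth stem. The $E_2$-page is determined by the mod-$2$ cohomology of this spectrum, which in the relevant range I would analyze as an $\cA(1)$-module using the fibration $B\SU_8 \to B(\Spin\text{-}\SU_8) \to B\Spin$ together with the Thom twist coming from the central $\Z/2$ quotient. As an independent check, I would also run the Atiyah--Hirzebruch spectral sequence for $\Omega_*^{\Spin\text{-}\SU_8}$, using the known low-dimensional (co)homology of $B\PSU_8$ to constrain the differentials. The main obstacle in this step will be controlling the Adams $d_r$ in stems $5$ and $6$ and excluding any free summand in $\Omega_6^{\Spin\text{-}\SU_8}$; the primary danger is that a class detected by a Chern-class-type generator in $H^6(B\PSU_8)$ could yield an unwanted free summand, which I would rule out using the structure of $\ko_*$-homology.

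For the second part, the Adams computation should identify the generator of $\Omega_5^{\Spin\text{-}\SU_8}$ as the Wu manifold $W = \SU_3/\SO_3$ equipped with a $\PSU_8$-bundle $P$ whose second Stiefel--Whitney class satisfies $w_2(P)=w_2(TW)$, so that $W$ admits the required twisted spin structure. By Dai--Freed, the partition function of the anomaly theory on $(W,P)$ is $\exp(2\pi i\,\eta_{\mathrm{tot}})$, where $\eta_{\mathrm{tot}}$ is a signed sum of reduced $\eta$-invariants of twisted Dirac operators associated to the fermion content of 4d $\cN=8$ supergravity: the gravitini twisted by $P$ in the representation $\mathbf{8}$ (with the usual spin-$3/2$ ghost subtraction for a spin-$1/2$ field in $\mathbf{8}$), together with the dilatini twisted by $P$ in $\Lambda^3\mathbf{8}\cong \mathbf{56}$. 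I would compute each of these $\eta$-invariants on $W$ using Atiyah--Patodi--Singer on a bounding Spin-$\SU_8$ six-manifold (or via the $\SU_3$-homogeneous description of $W$ to reduce each contribution to an explicit representation-theoretic expression), and then verify that the signed total is an integer, so that the anomaly class is trivial in $\Z/2$.
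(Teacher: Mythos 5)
Your plan for the classification step has two concrete problems. First, you claim it suffices to show $\Omega_5^{\Spin\text{-}\SU_8}\cong\Z/2$ \emph{and} that $\Omega_6^{\Spin\text{-}\SU_8}$ has no free summand, and you intend to rule out a free class detected by a degree-$6$ Chern class; but that free summand actually exists (rationally $\Omega_6^{\Spin\text{-}\SU_8}$ contains a $\Q$ detected by $c_3$ of the $\SU_8$-bundle), so this step would fail. It is also unnecessary: for reflection-positive invertible \emph{topological} theories the Freed--Hopkins classification gives the torsion subgroup of $[\mathit{MT}(\Spin\text{-}\SU_8),\Sigma^6 I_\Z]$, i.e.\ $\Ext(\Omega_5^{\Spin\text{-}\SU_8},\Z)$, so only $\Omega_5\cong\Z/2$ is needed (the anomaly theory is topological because Marcus showed the anomaly polynomial vanishes). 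Second, your Adams input ``analyze as an $\cA(1)$-module \dots\ rule out using $\ko_*$'' presupposes the usual change-of-rings, which is unavailable here: every representation of $\SU_8/\{\pm 1\}$ lifts to $\Spin_n$, so a spin-$\SU_8$ structure cannot be re-expressed as a $G$-bundle together with a spin structure on a twisted stable tangent bundle, and there is no identification of $\mathit{MT}(\Spin\text{-}\SU_8)$ with $\MTSpin$ smashed against a Thom spectrum of a $G$-bundle. One must instead compute $H^*(B(\Spin\text{-}\SU_8);\Z/2)$ with its full Steenrod action (via Serre spectral sequences for $B(\SU_8/\{\pm 1\})$) and run Adams over all of $\cA$, using the $2$-local splitting of $\MTSO$ into Eilenberg--MacLane summands and $h_0$-linearity to kill differentials; your outline walks straight into the obstruction the whole computation is organized around.

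For the evaluation step, your identification of the generator (the Wu manifold $W=\SU_3/\SO_3$ with an $\SO_3\subset\SU_8/\{\pm 1\}$ bundle satisfying $w_2(P)=w_2(TW)$) matches the paper, but your proposed computation of the $\eta$-invariants ``by APS on a bounding spin-$\SU_8$ six-manifold'' cannot work: $(W,P)$ generates $\Omega_5^{\Spin\text{-}\SU_8}\cong\Z/2$, so it is not null-bordant and no such bounding manifold exists. The paper's argument is representation-theoretic instead: since the local anomaly vanishes, the exponentiated $\eta$-invariants are bordism invariants, and under the fourfold block embedding $\SU_2\to\SU_8$ (inducing $\SO_3\to\SU_8/\{\pm1\}$) the relevant representations branch with even multiplicities, $\mathbf{56}\to 2(10\times\mathbf 2+2\times\mathbf 4)$, $\mathbf 8\to 4\times\mathbf 2$, $\mathbf{28}\to 2(3\times\mathbf 3+5\times\mathbf 1)$, so each contribution is twice another bordism invariant and hence trivial in $\Z/2$; the gravitino additionally uses $TW\cong\SU_3\times_{\SO_3}\su_3/\mathfrak{so}_3$ to reduce to Dirac $\eta$-invariants. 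Finally, you omit the $28$ vector bosons altogether: the paper includes their global anomaly, analyzed through $\Omega_5^{\SO}(B\SO_3)\cong\Z/2\times\Z/2$ (detected by $\int w_2w_3$ of $TM$ and of $P$) together with the branching of $\mathbf{28}$, and this contribution must be addressed before concluding that the total anomaly vanishes.
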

\begin{rem}
We also show that the nontrivial deformation class contains the invertible field theory whose partition function on
a closed $5$-manifold $M$ is $\int_m w_2(M) w_3(M)$, which is nonzero on the Wu manifold. This theory, and its
appearance as the anomaly theory of some 4d QFTs, are discussed in detail by Wang-Wen-Witten~\cite{WWW19}, who
describe how this anomaly can be canceled by tensoring with a TFT carrying the same anomaly.
\end{rem}

%

The order of the global anomaly is equal to the order of a bordism group in degree 5 that can be computed from the
Adams spectral sequence. We find that the global anomaly is $\bZ/2$ valued, but nonetheless is trivial when we take
into account the matter content of 4d $\mathcal{N}=8$ supergravity. In order to see the cancellation we first find
the manifold generator of the bordism group, which happens to be the Wu manifold, and compute $\eta$-invariants on
it.
This bordism computation is also mathematically intriguing because we find ourselves working over the entire Steenrod algebra, however the specific properties of the problem we are interested in make this tractable.

This work only focuses on U-duality as the group $E_{7(7)}$ rather than $E_{7(7)}(\bZ)$, because the cohomology of the discrete group that arises in string theory is not known, and a strategy we employ of taking the maximal compact subgroup will not work. But one could imagine running a similar Adams computation for the group $E_7(\bZ)$ and checking that the anomaly vanishes. 
There are also a plethora of dualities that arise from compactifying 11d supergravity that one can also compute anomalies of, among them are the U-dualities that arise from compactifying on lower dimensional tori. 

The structure of the paper is as follows: in \S\ref{sec:BordismSetup} we present the symmetries and tangential structure for the maximal 4d supergravity theory with U-duality symmetry and turn it into a bordism computation. We also give details on
the field content of the theory and how it is compatible with the type of manifold we are considering. 
In \S\ref{section:InvTheory} we review the possibility of global anomalies, and invertible field theories.
In \S\ref{sec:SSComputation} we perform the spectral sequence computation and give the manifold generator for the bordism group in question. In \S\ref{sec:EvalAnomaly} we show that the anomaly vanishes by considering the field content on the manifold generator.


\section{Placing the U-duality symmetry on manifolds}\label{sec:BordismSetup}
In this section, we review how the $E_{7(7)}$ U-duality symmetry acts on the fields of 4d $\cN = 8$
supergravity; then we discuss what kinds of manifolds are valid backgrounds in the presence of this symmetry. 
We assume that we have already Wick-rotated into Euclidean signature.  We determine a Lie group $H_4$ with a map
$\rho_4\colon H_4\to \O_4$ such that 4d $\cN = 8$ supergravity can be formulated on $4$-manifolds $M$ equipped with
a metric and an $H_4$-connection $P,\Theta\to M$, such that $\rho_4(\Theta)$ is the Levi-Civita connection. As we
review in \S\ref{section:InvTheory}, anomalies are classified in terms of bordism; once we know $H_4$ and $\rho_4$,
Freed-Hopkins' work~\cite{FH21a} tells us what bordism groups to compute.

The field content of 4d $\cN = 8$ supergravity coincides with the spectrum of type IIB closed string theory compactified on $T^6$ and consists of the following fields:
\begin{itemize}
    \item $70$ scalar fields,
    \item $56$ gauginos (spin $1/2$),
    \item $28$ vector bosons (spin $1$),
    \item $8$ gravitinos (spin $3/2$), and
    \item $1$ graviton (spin $2$).
\end{itemize}
Cremmer-Julia~\cite{CJ79} exhibited an $\mathfrak e_{7(7)}$ symmetry of this theory, meaning an action on the
fields for which the Lagrangian is invariant. Here, $\mathfrak e_{7(7)}$ is the Lie algebra of the real, noncompact
Lie group $E_{7(7)}$, which is the split form of the complex Lie group $E_7(\C)$. Cartan~\cite[\S VIII]{Car14} constructed $E_{7(7)}$
explicitly as follows: the $56$-dimensional vector space
\begin{equation}
	V \coloneqq \Lambda^2(\R^8)\oplus \Lambda^2 ((\R^8)^*)
\end{equation}
has a canonical symplectic form coming from the duality pairing. $E_{7(7)}$ is defined to be the subgroup of
$\mathrm{Sp}(V)$ preserving the quartic form
\begin{equation}
	q(x^{ab}, y_{cd}) = x^{ad}y_{bc}x^{cd}y_{da} - \frac 14 x^{ab}y_{ab}x^{cd}y_{cd} + \frac{1}{96}\left(
		\epsilon_{abc\dotsm h}x^{ab} x^{cd} x^{ef} x^{gh} + \epsilon^{abc\dotsm h}
		y_{ab}y_{cd}y_{ef}y_{gh}\right).
\end{equation}
Thus, by construction, $E_{7(7)}$ comes with a $56$-dimensional representation, which we denote $\boldsymbol{56}$.

$E_{7(7)}$ is noncompact; its maximal compact is $\SU_8/\set{\pm 1}$, giving us an embedding
$\su_8\subset \mathfrak e_{7(7)}$. Thus $\pi_1(E_{7(7)})\cong\Z/2$; let $\widetilde E_{7(7)}$ denote the
universal cover, which is a double cover.

There is an action of $\mathfrak e_{7(7)}$ on the fields of 4d $\cN = 8$ supergravity, but in this paper we only need to know how $\su_8\subset\mathfrak e_{7(7)}$ acts: we will see in \S\ref{section:Usymtype} that the anomaly calculation factors through the maximal compact subgroup of $E_{7(7)}$. For the full $\mathfrak e_{7(7)}$ story, see~\cite[\S 2]{FM13}; the $\mathfrak e_{7(7)}$-action exponentiates to an $\widetilde
E_{7(7)}$-action on the fields. The $\su_8$-action is:
%
%
\begin{enumerate}
	\item The $70$ scalar fields can be repackaged into a single field valued in $E_{7(7)}/(\SU_8/\set{\pm 1})$ with trivial $\su_8$-action.
    \item The gauginos transform in the representation $\mathbf{56}\coloneqq \Lambda^3(\C^8)$.
    \item The vector bosons transform in the $28$-dimensional representation $\Lambda^2(\C^8)$, which we call
	$\boldsymbol{28}$.
    \item The gravitinos transform in the defining representation of $\su_8$, which we denote
	$\boldsymbol 8$.
    \item The graviton transforms in the trivial representation.
\end{enumerate}

The presence of fermions (the gauginos and gravitinos) means that we must have data of a spin structure, or
something like it, to formulate this theory. In quantum physics, a strong form of $G$-symmetry is to couple to a
$G$-connection, suggesting that we should formulate 4d $\cN = 8$ supergravity on Riemannian spin $4$-manifolds $M$ together
with an $\widetilde E_{7(7)}$-bundle $P\to M$ and a connection on $P$. The spin of each field tells us which
representation of $\Spin_4$ it transforms as, and we just learned how the fields transform under the $\widetilde
E_{7(7)}$-symmetry, so we can place this theory on manifolds $M$ with a \term{geometric $\Spin_4\times\widetilde
E_{7(7)}$-structure}, i.e.\ a metric and a principal $\Spin_4\times\widetilde E_{7(7)}$-bundle $P\to M$ with
connection whose induced $\O_4$-connection is the Levi-Civita connection. The fields are sections of associated
bundles to $P$ and the representations they transform in. The Lagrangian is invariant under the
$\Spin_4\times\widetilde E_{7(7)}$-symmetry, so defines a functional on the space of fields, and we can study this
field theory as usual.

However, we can do better! We will see that the representations above factor through a quotient $H_4$ of
$\Spin_4\times\widetilde E_{7(7)}$, which we define below in~\eqref{H4defn}, so the same procedure above works with $H_4$ in place of
$\Spin_4\times\widetilde E_{7(7)}$. A lift of the structure group to $H_4$ is less data than a lift all the way to
$\Spin_4\times\widetilde E_{7(7)}$, so we expect to be able to define 4d $\cN = 8$ supergravity on more manifolds.
This is similar to the way that the $\SL_2(\Z)$ duality symmetry in type IIB string theory can be placed not just
on manifolds with a $\Spin_{10}\times\Mp_2(\Z)$-structure,\footnote{Here $\Mp_2(\Z)$ is the \term{metaplectic group}, a central extension of $\SL_2(\Z)$ of the form\label{metaplectic_footnote}
\begin{equation}
	\shortexact*{\set{\pm 1}}{\Mp_2(\Z)}{\SL_2(\Z)}.
\end{equation}} but on the larger class of manifolds with a
$\Spin_{10}\times_{\set{\pm 1}}\Mp_2(\Z)$-structure~\cite[\S 5]{Pantev:2016nze}, or how certain $\SU_2$ gauge
theories can be defined on manifolds with a $\Spin_n\times_{\set{\pm 1}}\SU_2$ structure~\cite{WWW19,albanese2021spinh,lawson2023spin}.

Let $-1\in\Spin_4$ be the nonidentity element of the kernel of $\Spin_4\to\SO_4$ and let $x$ be the nonidentity
element of the kernel of $\widetilde E_{7(7)}\to E_{7(7)}$. The key fact allowing us to descend to a quotient is
that $-1$ acts nontrivially on the representations of $\Spin_4\times\widetilde E_{7(7)}$ above, and $x$ acts
nontrivially, but on a given representation, these two elements both act by $1$ or they both act by $-1$. We can check this even though we have not specified the entire $\mathfrak e_{7(7)}$-action on the fields, because $-1\in\widetilde E_{7(7)}$ is contained in the copy of $\SU_8$ in $\widetilde E_{7(7)}$, and we have specified the $\su_8$-action. Therefore the $\Z/2$
subgroup of $\Spin_4\times\widetilde E_{7(7)}$ generated by $(-1, x)$ acts trivially, and we can form the quotient
\begin{equation}
\label{H4defn}
	H_4\coloneqq \Spin_4\times_{\set{\pm 1}} \widetilde E_{7(7)} = (\Spin_4\times\widetilde E_{7(7)})/\langle (-1,
	x)\rangle.
\end{equation}
The representations that the fields transform in all descend to representations of $H_4$, so following the
procedure above, we can define 4d $\cN = 8$ supergravity on manifolds $M$ with a \term{geometric $H_4$-structure},
i.e.\ a metric, an $H_4$-bundle $P\to M$, and a connection on $P$ whose induced $\O_4$-connection is the
Levi-Civita connection.

\begin{rem}
As a check to determine that we have the correct symmetry group, we can compare with other string dualities. The
U-duality group contains the S-duality group for type IIB string theory, which comes geometrically from the fact
that 4d $\cN = 8$ supergravity can be constructed by compactifying type IIB string theory on $T^6$. Therefore the
ways in which the duality groups mix with the spin structure must be compatible. As explained by
Pantev-Sharpe~\cite[\S 5]{Pantev:2016nze}, the $\SL_2(\Z)$ duality symmetry of type IIB string theory mixes with
the spin structure to form the group $\Spin_{10}\times_{\set{\pm 1}}\Mp_2(\Z)$, where $\Mp_2(\Z)$ is the metaplectic group from Footnote~\ref{metaplectic_footnote}.

Therefore the way in which the U-duality group mixes with $\set{\pm 1}\subset \Spin_4$ must also be nontrivial.
Extensions of a group $G$ by $\set{\pm 1}$ are classified by $H^2(BG;\set{\pm 1})$. If $G$ is connected, $BG$ is
simply connected and the Hurewicz and universal coefficient theorems together provide a natural identification
\begin{equation}
	H^2(BG;\set{\pm 1})\overset\cong\longrightarrow \Hom(\pi_2(BG), \set{\pm 1}) = \Hom(\pi_1(G), \set{\pm 1}).
\end{equation}
As $\pi_1(E_{7(7)})\cong\Z/2$, there is only one nontrivial extension of $E_{7(7)}$ by $\set{\pm 1}$, namely the
universal cover $\widetilde E_{7(7)}\to E_{7(7)}$. That is, by comparing with S-duality, we again obtain the group
$H_4$, providing a useful double-check on our calculation above.
\end{rem}

\section{Anomalies, invertible field theories, and bordism}\label{section:InvTheory}
\subsection{Generalities on anomalies and invertible field theories}
It is sometimes said that in mathematical physics, if you ask four people what an anomaly is, you will get five
answers. The goal of this section is to explain our perspective on anomalies, due to Freed-Teleman~\cite{FT14}, and how to
reduce the determination of the anomaly to a question in algebraic topology, an approach due to
Freed-Hopkins-Teleman~\cite{FHT10} and Freed-Hopkins~\cite{FH21a}.

Whatever an anomaly is, it signals a mild inconsistency in the definition of a quantum field theory. For example,
if a quantum field theory $Z$ is $n$-dimensional, one ought to be able to evaluate it on a closed $n$-manifold $M$,
possibly equipped with some geometric structure, to obtain a complex number $Z(M)$, called the \term{partition
function} of $M$. If $Z$ has an anomaly, $Z(M)$ might only be defined after some additional choices, and in the
absence of those choices $Z(M)$ is merely an element of a one-dimensional complex vector space $\alpha(M)$.

The theory $Z$ is local in $M$, so $\alpha(M)$ should also be local in $M$. One way to express this locality is to
ask that $\alpha(M)$ is the state space of $M$ for some $(n+1)$-dimensional quantum field theory $\alpha$, called
the \term{anomaly field theory} $\alpha$ of $Z$. The condition that the state spaces of $\alpha$ are
one-dimensional follows from the fact that $\alpha$ is an \term{invertible field theory}~\cite[Definition 5.7]{FM06}, meaning that there is some other field theory $\alpha^{-1}$ such that $\alpha\otimes\alpha^{-1}$ is isomorphic to
the trivial field theory $\boldsymbol 1$.\footnote{The relationship between invertibility and one-dimensional state
spaces is that $\alpha\otimes\alpha^{-1} \simeq\boldsymbol 1$ means that on any closed, $n$-manifold $M$, there is
an isomorphism of complex vector spaces $\alpha(M)\otimes\alpha^{-1}(M)\cong \boldsymbol 1(M) = \C$. This forces
$\alpha(M)$ and $\alpha^{-1}(M)$ to be one-dimensional. Often the converse is also true: see
Schommer-Pries~\cite{Sch18}.}\textsuperscript{,}\footnote{In some cases, we do not want to assume $\alpha$ extends to closed $n$-manifolds; see Freed-Teleman~\cite{FT14} for more information. But the U-duality anomaly we investigate in this paper does extend.}
This approach to anomalies is due to Freed-Teleman~\cite{FT14}; see also Freed~\cite{Freed:2014iua, Fre19}.
%

We can therefore understand the possible anomalies associated to a given $n$-dimensional quantum field theory $Z$
by classifying the $(n+1)$-dimensional invertible field theories with the same symmetry type as $Z$. The
classification of invertible \emph{topological} field theories is due to Freed-Hopkins-Teleman~\cite{FHT10}, who
lift the question into stable homotopy theory; see Grady-Pavlov~\cite[\S 5]{GP21} for a recent generalization to
the nontopological setting.

Supergravity with its U-duality symmetry is a unitary quantum field theory, and therefore its anomaly theory satisfies the Wick-rotated analogue of unitarity: \term{reflection positivity}.
Freed-Hopkins~\cite{FH21a} classify reflection-positive invertible field theories, again using stable homotopy
theory. Let $\O\coloneqq\lim_{n\to\infty}\O_n$ be the infinite orthogonal group.
\begin{thm}[{Freed-Hopkins~\cite[Theorem 2.19]{FH21a}}]
\label{stabilization}
Let $n\ge 3$, $H_n$ be a compact Lie group, and $\rho_n\colon H_n\to\O_n$ be a homomorphism whose image contains
$\SO_n$. Then there is canonical data of a topological group $H$ and a continuous homomorphism $\rho\colon H\to\O$
such that the pullback of $\rho$ along $\O_n\hookrightarrow\O$ is $\rho_n$.
\end{thm}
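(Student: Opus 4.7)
The plan is to construct $(H, \rho)$ as the colimit of an inductively built tower
\[
H_n \hookrightarrow H_{n+1} \hookrightarrow H_{n+2} \hookrightarrow \cdots
\]
of compact Lie groups, together with homomorphisms $\rho_m \colon H_m \to \O_m$ for $m \ge n$, arranged so that each square
\[
\begin{tikzcd}
H_m \ar[r, hook] \ar[d, "\rho_m"'] & H_{m+1} \ar[d, "\rho_{m+1}"] \\
\O_m \ar[r, hook] & \O_{m+1}
\end{tikzcd}
\]
is Cartesian, i.e., $H_m = \rho_{m+1}^{-1}(\O_m)$. Setting $H \coloneqq \mathrm{colim}_{m \ge n} H_m$ and $\rho \coloneqq \mathrm{colim}_{m \ge n} \rho_m$ then yields a topological group and continuous homomorphism $\rho \colon H \to \O$ whose pullback along $\O_n \hookrightarrow \O$ is $\rho_n$ by construction.

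For the inductive step, the hypothesis $\SO_n \subseteq \mathrm{im}(\rho_n)$ forces the kernel $K \coloneqq \ker \rho_n$ to be a compact Lie group and exhibits $H_n$ as an extension $1 \to K \to H_n \to \mathrm{im}(\rho_n) \to 1$ with $\mathrm{im}(\rho_n) \in \{\SO_n, \O_n\}$; the same dichotomy will persist for every $H_m$ built in the tower. Constructing $H_{m+1}$ with the Cartesian property amounts to extending this extension over $\mathrm{im}(\rho_m)$ to one with the same kernel $K$ over $\mathrm{im}(\rho_{m+1})$, and this extension problem is classified by (in the central case) pulling back the class in $H^2(B\mathrm{im}(\rho_m); K)$ along $B\mathrm{im}(\rho_m) \hookrightarrow B\mathrm{im}(\rho_{m+1})$; in general, an analogous nonabelian $H^2$ classification applies.

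The main obstacle, and where I expect the technical work to concentrate, is verifying the vanishing of the obstructions that control existence and uniqueness of this extension. These live in the cohomology of the cofiber of $B\O_m \hookrightarrow B\O_{m+1}$; because $\O_{m+1}/\O_m = S^m$, a direct inspection (equivalently, the classical computation of the restriction map on Stiefel--Whitney classes) shows that this cofiber is $m$-connected. For $m \ge n \ge 3$ this kills both the degree-$2$ uniqueness obstruction and the degree-$3$ existence obstruction, so $\rho_{m+1}$ is determined uniquely and canonically by $\rho_m$. Propagating this canonicality through the induction and taking the colimit produces the asserted canonical data $(H, \rho)$, and by construction its restriction along the stabilization $\O_n \hookrightarrow \O$ reproduces $\rho_n$.
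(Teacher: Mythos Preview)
The paper does not prove this theorem: it is quoted from Freed--Hopkins with a citation and used as input, so there is no in-paper proof to compare against.

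Your outline is essentially the argument Freed--Hopkins give. The inductive tower of Cartesian squares over $\O_m\hookrightarrow\O_{m+1}$, the reduction to an extension problem for $1\to K\to H_m\to\mathrm{im}(\rho_m)\to 1$, and the identification of the existence/uniqueness obstructions with low-degree cohomology of the cofiber of $B\O_m\hookrightarrow B\O_{m+1}$ are all correct. Two small points would need tightening in a full write-up: in the nonabelian case one must first extend the outer action $\mathrm{im}(\rho_m)\to\mathrm{Out}(K)$ before posing the $H^2$/$H^3$ problem (this is itself an obstruction problem in degrees $\le 2$, so the same connectivity argument disposes of it), and the coefficients for the degree-$3$ existence obstruction are the center $Z(K)$ rather than $K$. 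Since $\O_{m+1}/\O_m\cong S^m$ makes $B\O_m\to B\O_{m+1}$ an $m$-connected map, the cofiber has vanishing cohomology through degree $m$, and for $m\ge n\ge 3$ this kills everything in sight; your sketch is sound.
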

In other words, when the hypotheses of this theorem hold, we have more than just $H_n$-structures on $n$-manifolds;
we can define $H$-structures on manifolds of any dimension, by asking for a lift of the classifying map of the
stable tangent bundle $M\to B\O$ to $BH$; a manifold equipped with such a lift is called an \term{$H$-manifold}.
Following Lashof~\cite{Las63}, this allows us to define bordism groups $\Omega_k^H$ and a homotopy-theoretic object
called the \term{Thom spectrum} $\MTH$, whose homotopy groups are the $H$-bordism groups.\footnote{\label{MT_vs_M}In homotopy theory, it is common to work with $H$-structures on the stable normal bundle instead of on the tangent bundle; the Thom spectrum whose homotopy groups are bordism groups of manifolds with $H$-structures on their stable normal bundle is called $\mathit{MH}$. For all groups $H$ we discuss in this paper, an $H$-structure on the tangent bundle of $M$ induces an $H$-structure on the stable normal bundle of $M$ and vice versa, and $\MTH\simeq\mathit{MH}$; thus this distinction will not matter for us. However, in some other situations, it does matter; for example, $\mathit{MTPin}^+\not\simeq\mathit{MPin}^+$.} See~\cite[\S 2]{BC18} for more on the definition of $\MTH$ and its context in stable homotopy theory.
\begin{thm}[Freed-Hopkins~\cite{FH21a}]
\label{FHclassification}
With $H_n$ as in \cref{stabilization}, the abelian group of deformation classes of $n$-dimensional
reflection-positive invertible topological field theories on $H_n$-manifolds is naturally isomorphic to the torsion
subgroup of $[\MTH, \Sigma^{n+1}I_\Z]$.
\end{thm}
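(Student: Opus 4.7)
The plan is to prove the classification in three conceptual stages: realize invertible TFTs as maps of spectra, identify the correct target spectrum using reflection positivity, and extract topological theories as the torsion subgroup. First, I would invoke the stabilization result (\cref{stabilization}) to pass from the dimension-$n$ tangential structure $\rho_n\colon H_n\to\O_n$ to the stable structure $\rho\colon H\to\O$, which is what makes the Madsen-Tillmann-Thom spectrum $\MTH$ well-defined. Then, following Galatius-Madsen-Tillmann-Weiss and its higher-categorical refinements in the spirit of the cobordism hypothesis, the $H$-bordism $(\infty,n)$-category has underlying space modeled by $\Omega^\infty\MTH$. Since invertible TFTs factor through Picard subcategories of their targets, they are classified by maps of spectra from $\MTH$ into the Picard spectrum $E$ of the target symmetric monoidal higher category.

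Second, I would identify the target spectrum $E$. The physical input is that partition functions of reflection-positive invertible TFTs live in $\U_1\subset\C^\times$, and reflection positivity (the Wick-rotated form of unitarity) constrains how these phases vary with the underlying manifold. This reality condition should identify $E$ with the Anderson dual of the sphere, $I_\Z$, characterized on any spectrum $X$ by a natural short exact sequence
\begin{equation}
    0\longrightarrow \Ext(\pi_{k-1} X, \Z)\longrightarrow [X, \Sigma^k I_\Z]\longrightarrow \Hom(\pi_k X, \Z)\longrightarrow 0.
\end{equation}
The suspension $\Sigma^{n+1} I_\Z$ accounts for the fact that $n$-dimensional partition functions sit one cohomological degree above the manifold dimension, matching the role played by Brown-Comenetz duality for finite groups of phases.

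Third, I would restrict to topological theories and show they form exactly the torsion subgroup. The free quotient of $[\MTH, \Sigma^{n+1} I_\Z]$ maps to $\Hom(\Omega_n^H,\Z)$ via the sequence above, and nonzero classes here can be realized by integrals of Chern-Weil-type forms, i.e.\ Chern-Simons-type invertible theories whose partition functions vary smoothly with geometric data; such theories are not topological. Conversely, a topological invertible TFT has partition function locally constant under deformations of connections and metrics, so its image in the free part must vanish, forcing it into the torsion subgroup. The main obstacle is the first stage: rigorously establishing the correspondence between invertible TFTs and spectrum maps while correctly encoding reflection positivity at the categorical level. Freed-Hopkins circumvent the need to prove a fully general cobordism hypothesis by working model-independently in the invertible setting, where the Picard-spectrum formalism is far more tractable than the full $(\infty,n)$-categorical machinery required for arbitrary TFTs.
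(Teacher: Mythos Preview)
The paper does not prove this theorem; it is stated with attribution to Freed--Hopkins and a citation to~\cite{FH21a}, and the paper simply uses it as input. There is therefore no proof in the paper against which to compare your proposal.

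That said, your sketch is a reasonable high-level outline of the Freed--Hopkins strategy, with one indexing slip worth flagging. Plugging $X = \MTH$ and $k = n+1$ into your short exact sequence gives
\[
0\longrightarrow \Ext(\Omega_n^H,\Z)\longrightarrow [\MTH,\Sigma^{n+1}I_\Z]\longrightarrow \Hom(\Omega_{n+1}^H,\Z)\longrightarrow 0,
\]
so the free quotient is $\Hom(\Omega_{n+1}^H,\Z)$, not $\Hom(\Omega_n^H,\Z)$ as you wrote. This is consistent with the paper's later discussion, where the free quotient is interpreted as degree-$(n+2)$ characteristic classes (anomaly polynomials) for $n$-dimensional theories, i.e.\ one degree above the partition-function dimension. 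Your third paragraph's heuristic---that the free part is populated by secondary invariants of Chern--Weil type which depend on geometric data, hence topological theories land in torsion---is the correct intuition, but the rigorous version in Freed--Hopkins requires substantial work on the categorical side (their ``continuous'' invertible theories and the identification of the target with $I_\Z$), which your sketch correctly identifies as the main obstacle.
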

Freed-Hopkins then conjecture (\textit{ibid.}, Conjecture 8.37) that the whole group $[\MTH, \Sigma^{n+1}I_\Z]$
classifies all reflection-positive invertible field theories, topological or not.

The notation $[\MTH, \Sigma^{n+1}I_\Z]$ means the abelian group of homotopy classes of maps between $\MTH$ and an
object $\Sigma^{n+1}I_\Z$ belonging to stable homotopy theory; see~\cite[\S 6.1]{FH21a} for a brief introduction in
a mathematical physics context. We mentioned $\MTH$ above; $I_\Z$ is the \term{Anderson dual of the sphere
spectrum}~\cite{And69, Yos75}, characterized up to homotopy equivalence by its universal property, which says that
there is a natural short exact sequence
\begin{equation}
    \shortexact{\Ext(\pi_{n-1}(E), \Z)}{[E, \Sigma^nI_\Z]}{\Hom(\pi_n(E), \Z)}.
\end{equation}
Applying this when $E = \MTH$, we obtain a short exact sequence
\begin{equation}\label{eq:SESanomaly}
	\shortexact[\vp][\psi]{\Ext(\Omega_{n+1}^H, \Z)}{[\MTH, \Sigma^{n+2}I_\Z]}{\Hom(\Omega_{n+2}^H,
	\Z)}{}
\end{equation}
decomposing the group of possible anomalies of unitary QFTs on $H_n$-manifolds. These two factors admit
interpretations in terms of anomalies.
\begin{enumerate}
	\item The quotient $\Hom(\Omega_{n+2}^H, \Z)$ is a free abelian group of degree-$(n+2)$ characteristic
	classes of $H$-manifolds. The map $\psi$ sends an anomaly field theory to its anomaly polynomial. This is the
	part of the anomaly visible to perturbative methods, and sometimes is called the \term{local anomaly}.
	\item The subgroup $\Ext(\Omega_{n+1}^H, \Z)$ is isomorphic to the abelian group of torsion
	bordism invariants $f\colon \Omega_{n+1}^H\to\C^\times$. These classify the reflection-positive invertible
	\emph{topological} field theories $\alpha_f$: the correspondence is that the bordism invariant $f$ is the
	partition function of $\alpha_f$. This part of an anomaly field theory is generally invisible to perturbative
	methods and is called the \term{global anomaly}.
\end{enumerate}
Work of Yamashita-Yonekura~\cite{Yamashita:2021cao} and Yamashita~\cite{Yam21} relates this short exact sequence to
a differential generalized cohomology theory extending $\mathrm{Map}(\MTH, \Sigma^{n+1}I_\Z)$.
\subsection{Specializing to the U-duality symmetry type}\label{section:Usymtype}
For us, $n = 4$ and the symmetry type is $H_4 = \Spin\times_{\set{\pm 1}}\widetilde E_{7(7)}$. This group is not
compact, so \cref{stabilization,FHclassification} above do not apply. However, we can work around this obstacle:
Marcus~\cite{Mar85} proved that the anomaly polynomial of the $E_{7(7)}$ symmetry vanishes,\footnote{Marcus'
analysis does not discuss the question of $H_4$ versus $\Spin_4\times\widetilde E_{7(7)}$, but this does not
matter: in many cases including the one we study, the anomaly polynomial for a $d$-dimensional field theory on
$G$-manifolds is an element of $H^{d+2}(BG;\Q)$, and rational cohomology is insensitive to finite covers such as
$\Spin_4\times\widetilde E_{7(7)}\to H_4$. Thus Marcus' computation applies in our case too.} meaning that the
anomaly field theory is a \emph{topological} field theory. Thinking of topological field theories as symmetric
monoidal functors $\cat{Bord}_n^{H_n}\to\cat C$, we can freely adjust the structure we put on manifolds in these
theories as long as the induced map on bordism categories is an equivalence. We make two adjustments.
\begin{enumerate}
	\item First, forget the metric and connection in the definition of a geometric $H_4$-structure. The space of
	such data is contractible and therefore can be ignored for topological field theories.
	\item We can then replace $H_4$ with its maximal compact subgroup: for any Lie group $G$ with $\pi_0(G)$
	finite, inclusion of the maximal compact subgroup $K\hookrightarrow G$ is a homotopy equivalence~\cite{Mal45,
	Iwa49} and defines a natural equivalence of groupoids $\cat{Bun}_K(X)\overset\simeq\to \cat{Bun}_G(X)$ on
	spaces $X$, hence a symmetric monoidal equivalence of bordism categories of manifolds with these kinds of
	bundles.
\end{enumerate}
$\Spin_4$ is compact, and the maximal compact subgroup of $\widetilde E_{7(7)}$ is $\SU_8$, so the maximal compact of $H_4$
is the group $\Spin_4\times_{\set{\pm 1}} \SU_8$. Now \cref{stabilization,FHclassification} apply: the
stabilization of $\Spin_4\times_{\set{\pm 1}}\SU_8$ is $\Spin\text{-}\SU_8\coloneqq \Spin\times_{\set{\pm
1}}\SU_8$, and the anomaly field theory is classified by the torsion subgroup of $[\mathit{MT}(\Spin\text{-}\SU_8),
\Sigma^6I_\Z]$, which is determined by $\Omega_5^{\Spin\text{-}\SU_8}$.

In \cref{the_bordism_groups}, we prove $\Omega_5^{\Spin\text{-}\SU_8}\cong\Z/2$, so there is potential for the
anomaly field theory to be nontrivial. The particular anomaly we study is similar to the anomaly classified by $\Omega^{\Spin\text{-}\SU_2}_5=(\Z/2)^2$ where one of the $\Z/2$ is the new $\SU_2$ anomaly in \cite{WWW19}, as well as $\Omega_5^{\Spin\text{-}\Spin_{10}}=\Z/2$. The latter group is applicable to $\Spin_{10}$ GUT-anomalies studied in \cite{WY:21,WY2:21,WY22}, with a more field theoretic point of view.

The quotient map $\Spin\text{-}\SU_8\to\SO\times\SU_8/\set{\pm 1}$ by the central $\set{\pm 1}$ subgroup means that a spin-$\SU_8$ structure on a manifold $M$ induces an orientation on $M$ and a principal $\SU_8/\set{\pm 1}$-bundle $P\to M$, similar to the way a spin\textsuperscript{$c$} structure induces an orientation and a principal $\U_1$-bundle. To lift from an orientation and a $\U_1$-bundle to a spin\textsuperscript{$c$} structure one must trivialize a characteristic class, and likewise for spin-$\SU_8$ structures, as we now prove.
\begin{lem}
\label{smallcoh}
$H^1(B(\SU_8/\set{\pm 1});\Z/2)\cong 0$ and $H^2(B(\SU_8/\set{\pm 1});\Z/2)\cong\Z/2$.
\end{lem}
\begin{proof}
Since $\SU_8/\set{\pm 1}$ is connected, $B(\SU_8/\set{\pm 1})$ is simply connected, and the Hurewicz and universal coefficient theorems then tell us $H^1(B(\SU_8/\set{\pm 1});\Z/2) = 0$. For $H^2$, use that $\pi_2(B(\SU_8/\set{\pm 1}))\cong\pi_1(\SU_8/\set{\pm 1})\cong\Z/2$, because $\SU_8\to\SU_8/\set{\pm 1}$ is a double cover by a simply connected space; then apply the Hurewicz and universal coefficient theorems again.
\end{proof}
\begin{lem}
\label{which_cext}
The central extension
\begin{equation}
\label{spSU8_xt}
    \shortexact*{\set{\pm 1}}{\Spin\times_{\set{\pm 1}}\SU_8}{\SO\times\SU_8/\set{\pm 1}}{}
\end{equation}
is classified by $w_2 + a\in H^2(B(\SO\times\SU_8/\set{\pm 1});\Z/2)$, where $a$ is the nonzero element of
$H^2(BG_8;\Z/2)$.
\end{lem}
\begin{proof}
Since $B(G\times H)\simeq BG\times BH$ for topological groups $G$ and $H$, and since $H^1(B\SO;\Z/2)$ and $H^1(B\SU_8/\set{\pm 1};\Z/2)$ both vanish, the Künneth formula identifies
\begin{equation}
\label{Ksplit}
    H^2(B(\SO\times\SU_8/\set{\pm 1});\Z/2)\cong H^2(B\SO;\Z/2)\oplus H^2(B\SU_8/\set{\pm 1};\Z/2).
\end{equation}
In view of the Künneth splitting~\eqref{Ksplit} and the fact that both $H^2(B\SO;\Z/2)$ and $H^2(B\SU_8/\set{\pm 1};\Z/2)$ are isomorphic to $\Z/2$, to show $x = w_2+a$ it suffices to pull the extension~\eqref{spSU8_xt} back along the inclusions $i_1\colon \SO\to\SO\times\SU_8/\set{\pm 1}$ and $i_2\colon \SU_8/\set{\pm 1}\to\SO\times \SU_8/\set{\pm 1}$ and see that both pullbacks do not split.

After pulling back by $i_1$, we get the extension $1\to\set{\pm 1}\to\Spin\to\SO\to 1$, which is not split, as it is classified by $w_2$; after pulling back by $i_2$, we get the extension $1\to\set{\pm 1}\to\SU_8\to\SU_8/\set{\pm 1}\to 1$; because $\SU_8$ is connected but $\set{\pm 1}$ is not connected, this extension is also not split. Therefore $x = w_2 + a$.
\end{proof}
\begin{lem}
\label{SSU_lift}
Let $M$ be an oriented manifold and $P\to M$ be a principal $\SU_8/\set{\pm 1}$-bundle. The data of a spin-$\SU_8$-structure inducing this orientation and principal $\SU_8/\set{\pm 1}$-bundle is equivalent to a trivialization of $w_2(M) + a(P)$, where $a$ is the unique nonzero element of $H^2(B(\SU_8/\set{\pm 1}); \Z/2)$.\footnote{By a \emph{trivialization} of a characteristic class $a\in H^\ell(X; A)$, we mean data of a null-homotopy of the corresponding map $X\to K(A, \ell)$. Thus a trivialization of $w_1(M)$ on a manifold is equivalent data to an orientation of $M$, a trivialization of $w_2(M)$ on an oriented manifold is equivalent data to a spin structure on $M$, etc.}
\end{lem}
\begin{proof}
Apply the classifying space functor to the short exact sequence~\eqref{spSU8_xt} to obtain a fibration
\begin{equation}\begin{tikzcd}
	{B(\Spin\times_{\set{\pm 1}}\SU_8)} \\
	{B(\SO\times \SU_8/\set{\pm 1})} & {K(\Z/2, 2),}
	\arrow["{w_2 +a}", from=2-1, to=2-2]
	\arrow["\pi", from=1-1, to=2-1]
\end{tikzcd}\end{equation}
implying that for any space $X$ and map $f\colon X\to B(\SO\times\SU_8/\set{\pm 1})$, the data of a lift of $f$ to a map $\widetilde f\colon X\to B(\Spin\times_{\set{\pm 1}}\SU_8)$ such that $\pi\circ\widetilde f = f$ is equivalent data of a null-homotopy of $(w_2 + a)\circ f$. Now let $X = M$ and $f$ be the map classifying $TM$ with its orientation and $P$ to conclude.
\end{proof}
\begin{rem}
If we knew $a$ was $w_2$ of a the associated vector bundle to a representation $\rho$ of $\SU_8/\set{\pm 1}$, then \cref{SSU_lift} would follow from~\cite[Corollary 10.23]{DDHM22}. However, no such $\rho$ exists, as we discuss in \cref{oops_all_spin}.
\end{rem}
%
%
\begin{rem}
Computing bordism groups to determine whether an anomaly is trivial is a well-established technique in the mathematical physics literature: other papers taking this approach include~\cite{Wit86, Kil88, Mon15, Cam17, Mon17, GPW18, Hsi18, STY18, DGL19, Garcia-Etxebarria:2018ajm, MM19, TY19, WW19, WW19c, WWZ19, Witten:2019bou, BLT20, Davighi:2020uab, DL20, DL20c, GOPWW20, HKT20, HTY20, JF20, KPMT20, Tho20, WW20, WW19a, WW20b, FH21, FH21a, DDHM21, DGG21, GP21a, Gri21, Koi21, LOT21, LOT21a, Lee:2020ewl, TY21, Yu:2020twi, WNC21, Davighi:2022icj, Deb22, LY22, Tac22, Yon22}.
\end{rem}

%
%
%
\section{Spectral sequence computation}\label{sec:SSComputation}
The $E_2$ page for U-duality in the Adams spectral sequence is~\cite[Theorem 2.1,\,2.2]{Adams1957/58}
\begin{equation}
\label{the_Adams_sig}
 \Ext^{s,t}_{\cA}(H^*(\mathit{MT}(\Spin\text{-}\SU_8);\Z/2),\Z/2) \Rightarrow \pi_{s-t}(\mathit{MT}(\Spin\text{-}\SU_8))^{\wedge}_2 \cong (\Omega^{\Spin\text{-}\SU_8}_{s-t})^{\wedge}_2\,,
\end{equation}
which converges to the 2-completion of the
desired bordism group via the Pontrjagin-Thom construction.

Let $G_8\coloneqq\SU_8/\set{\pm 1}$. The standard way to tackle Adams spectral sequence questions such as~\eqref{the_Adams_sig} would be to re-express a spin\text{-}$\SU_8$ structure on a vector bundle $E\to M$ as data of a principal $G_8$-bundle $P\to M$ and a spin structure on $E\oplus \rho_P$, where $\rho_P$ is the associated bundle to $P$ and some representation $\rho$ of $G_8$. Once this is done, one invokes a change-of-rings theorem that makes calculating the $E_2$-page of~\eqref{the_Adams_sig} much easier. For several great examples of this technique, see~\cite{Cam17, BC18}.

Unfortunately, this strategy is not available for spin-$\SU_8$ bordism. The reason is that $\rho$, thought of as a map $\rho\colon G_8\to\mathrm O_n$ for some $n$, cannot lift to a map $G_8\to\Spin_n$; if it does, a spin structure on $E\oplus\rho_P$ is equivalent to a spin structure on $E$ by the two-out-of-three property. However, $G_8$ does not have any non-spin representations.
%
%
\begin{thm}[Speyer~\cite{Speyer}]
\label{oops_all_spin}
All representations $\rho\colon G_8 \to \O_n$ lift to $\Spin_n$.
\end{thm}
We give another proof of this fact using $H^*(BG_8;\Z/2)$, which we calculate in \cref{Ugp_coh} in low degrees.
\Cref{Ugp_coh} appears later in this paper, but does not use \cref{oops_all_spin}, so this argument is not
circular.
\begin{proof}
Suppose that a representation $\rho$ of $G_8$ admitting no lift to $\Spin_n$ exists, and let $V\to BG_8$ be its
associated vector bundle; since $\rho$ does not admit a lift to $\Spin_n$, $V$ has no spin structure, so $w_2(V)\ne
0$. The $\cA$-action on the cohomology of the Thom spectrum $(BG_8)^V$ is determined by $H^*(BG_8;\Z/2)$ and
$w_2(V)$ by a formula; we will show below that there is no nonzero choice of $w_2(V)$ such that this formula
satisfies the Adem relations, so no such $\rho$ can exist.

In \cref{smallcoh}, we saw that $H^1(BG_8;\Z/2) = 0$ and $H^2(BG_8;\Z/2) \cong\Z/2$. Let $a$ be the nonzero element
of $H^2(BG_8;\Z/2)$. Since $H^1(BG_8;\Z/2)$ vanishes, $w_1(V) = 0$, and since $H^2(BG_8;\Z/2) = \set{0, a}$ and
$w_2(V)\ne 0$, $w_2(V) = a$. Let $b$ and $d$ be the classes we construct define in $H^*(BG_8;\Z/2)$ in
\cref{Ugp_coh}; then, using the Thom isomorphism and how it affects the $\cA$-module
structure on cohomology (see, e.g., \cite[\S 3.3, \S 3.4]{BC18}), we can compute that if $U$ is the Thom class in
the mod $2$ cohomology of $(BG_8)^V$, then $\Sq^2\Sq^1\Sq^2U = U(ab+d)$, $\Sq^4\Sq^1U = 0$, and there is
no class $x$ with $\Sq^1(Ux) = U(ab+d)$. This contradicts the Adem relation $\Sq^2\Sq^1\Sq^2 = \Sq^4\Sq^1 +
\Sq^1\Sq^4$ that must hold in the cohomology of any space or spectrum.
\end{proof}
%
%
Thus we cannot proceed via the usual change-of-rings simplification, and we must run the Adams spectral sequence over the entire mod $2$ Steenrod algebra $\cA$, which is harder. Similar problems occur in a few other places in the mathematical physics literature, including~\cite{FH21, Deb22}. It would be interesting to find more problems where similar complications occur when trying to work with twisted spin bordism.
%

In order to set up the Adams
computation, a necessary step is to establish the two theorems in
\S\ref{subsec:computingcohomology}  with the goal to give the Steenrod actions on $H^*(B(\Spin\text{-}\SU_8); \Z/2)$. Applying the Thom isomorphism takes care of the rest.
We also detail the simplifications that make working over the entire Steenrod algebra accessible. We refer the reader to \cite{BC18} which highlights many of the computational details of the Adams spectral sequence,
but mainly employs a change of rings to work over $\cA(1)$. 
We start by showing that computing the 2-completion is sufficient for the tangential structure we are considering. 

\subsection{Nothing interesting at odd primes} We will show that the Adams spectral sequence computation that we run which only gives the two torsion part of the anomaly is sufficient for our purposes. 
\begin{prop}
$\Omega_*^{\Spin\text{-}\SU_8}$ has no $p$-torsion when $p$ is an odd prime.
\end{prop}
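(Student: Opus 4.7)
The plan is to reduce to $\Spin$-bordism of $B\SU_8$ at odd primes, then collapse an Atiyah--Hirzebruch spectral sequence by a parity/divisibility argument. Fix an odd prime $\ell$. Since $B\Z/2$ is $\ell$-locally contractible, the fiber sequence
\begin{equation*}
B\Z/2 \longrightarrow B(\Spin\times\SU_8) \longrightarrow B(\Spin\text{-}\SU_8)
\end{equation*}
exhibits the right-hand map as an $\ell$-local homotopy equivalence. The stable tangential structure on $\Spin\text{-}\SU_8$ only sees the $\Spin$ factor, so on Thom spectra the $\SU_8$ direction splits off:
\begin{equation*}
\mathit{MT}(\Spin\text{-}\SU_8)_{(\ell)} \simeq \bigl(\MTSpin\wedge(B\SU_8)_+\bigr)_{(\ell)}.
\end{equation*}
It therefore suffices to show $\Omega_*^{\Spin}(B\SU_8)$ has no $\ell$-torsion.

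For this I would feed the sparseness of the coefficients into the Atiyah--Hirzebruch spectral sequence
\begin{equation*}
E^2_{p,q} = H_p(B\SU_8;\Omega_q^{\Spin}) \Longrightarrow \Omega_{p+q}^{\Spin}(B\SU_8),
\end{equation*}
localized at $\ell$. Two standard facts make the $E^2$ page very thin: (i) the map $\Spin\to\SO$ is an $\ell$-local equivalence, so $\Omega_*^{\Spin}_{(\ell)}\cong \Omega_*^{\SO}_{(\ell)}$, and $\Omega_*^{\SO}[1/2]$ is a polynomial algebra on generators in degrees divisible by $4$; (ii) $H^*(B\SU_8;\Z)=\Z[c_2,\ldots,c_8]$ is torsion-free and concentrated in even degrees. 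So after localizing at $\ell$, $E^2_{p,q}$ is a free $\Z_{(\ell)}$-module, supported only on bidegrees with $p$ even and $q\equiv 0\pmod 4$.

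The crucial observation is then that a differential $d_r\colon E^r_{p,q}\to E^r_{p-r,q+r-1}$ can only be nonzero when both endpoints lie in this support, which forces $r$ even and $r-1\equiv 0\pmod 4$ simultaneously---impossible for any $r\ge 2$. The spectral sequence therefore collapses at $E^2$, and because extensions of torsion-free abelian groups are torsion-free, the abutment $\Omega_*^{\Spin}(B\SU_8)_{(\ell)}$ is torsion-free. The argument is essentially routine; the only real content is the parity/divisibility count that kills all differentials, which I do not expect to present any difficulty.
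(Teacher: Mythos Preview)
Your proposal is correct and follows essentially the same route as the paper: reduce via the $B\Z/2$ fiber to $\Omega_*^{\Spin}(B\SU_8)$ at odd primes, then run the Atiyah--Hirzebruch spectral sequence and kill all differentials by a parity argument using that both the coefficients and $H_*(B\SU_8;\Z)$ are torsion-free and concentrated in even degrees. The paper makes one extra explicit reduction (passing from $\Spin$ to $\SO$ bordism before invoking the AHSS) and uses the simpler observation that all nonzero entries sit in even total degree, whereas you use the finer fact that $\Omega_*^{\SO}[1/2]$ lives in degrees divisible by~$4$; either suffices, and the arguments are otherwise interchangeable.
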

\begin{proof}
The quotient $\Spin\times\SU_8\to\Spin\text{-}\SU_8$ is a double cover, hence on classifying spaces is
a fiber bundle with fiber $B\Z/2$. $H^*(B\Z/2;\Z/p) = \Z/p$ concentrated in degree $0$, so $B(\Spin\times\SU_8)\to
B(\Spin\text{-}\SU_8)$ is an isomorphism on $\Z/p$ cohomology (e.g.\ look at the Serre spectral
sequence for this fiber bundle). The Thom isomorphism lifts this to an isomorphism of cohomology of the relevant
Thom spectra, and then the stable Whitehead theorem implies that the forgetful map
$\Omega_*^\Spin(B\SU_8)\to\Omega_*^{\Spin\text{-}\SU_8}$ is an isomorphism on $p$-torsion.

The same argument applies to the double cover $\Spin\times\SU_8\to\SO\times\SU_8$, so the $p$-torsion in
$\Omega_*^{\Spin\text{-}\SU_8}$ is isomorphic to the $p$-torsion in $\Omega_*^\SO(B\SU_8)$. Now apply the Atiyah-Hirzebruch
spectral sequence. Averbuh~\cite{Ave59} and Milnor~\cite[Theorem 5]{Mil60} prove there is no $p$-torsion in
$\Omega_\ast^\SO$, and Borel~\cite[Proposition 29.2]{Bor51} shows
there is no
$p$-torsion in $H_*(B\SU_8;\Z)$ and $H_*(B\SU_8;\Z/2)$. Therefore the only way to obtain $p$-torsion in
$\Omega_*^\SO(B\SU_8)$ would be from a differential between free summands, but all free summands in $\Omega_*^\SO$
and $H_*(B\SU_8;\Z)$ are contained in even degrees, so there are no differentials between free summands, and
therefore no $p$-torsion.
\end{proof}

\subsection{Computing the cohomology of $B(\Spin\text{-}\SU_8)$}\label{subsec:computingcohomology}
We first prove \cref{Ugp_coh}, where we compute $H^*(BG_8;\Z/2)$ and its $\cA$-module structure in low degrees. We then use this to compute $H^*(B(\Spin\text{-}\SU_8);\Z/2)$ as an $\cA$-module in low degrees in \cref{thm:cohHstructure}, allowing us to run the Adams spectral sequence in \S\ref{subsec:AdamsComputation}. Our computations make heavy use of the Serre spectral sequence; for more on the Serre spectral sequence and its application to physical problems see \cite{Garcia-Etxebarria:2018ajm,Yu:2020twi, Yu:2021kzr,LY22,Lee:2020ewl,Davighi:2020uab,Davighi:2022icj}. See also Manjunath-Calvera-Barkeshli~\cite[\S D.6]{MCB22}, who perform a related Serre spectral sequence computation to determine some integral cohomology groups of $BG_8$.
\begin{thm}
\label{Ugp_coh}
$H^*(BG_8;\Z/2)\cong\Z/2[a, b, c, d, e, \dots]/(\dots)$ with $\abs a = 2$, $\abs b = 3$, $\abs c =
4$, $\abs d = 5$, and $\abs e = 6$, and there are no other generators or relations below degree $7$. The Steenrod
squares are
\begin{equation}\label{eq:SteenrodSU8/Z2}
\begin{aligned}
	\Sq(a) &= a + b + a^2\\
	\Sq(b) &= b+d + b^2\\
	\Sq(c) &= c+e + \Sq^3(c) + c^2\\
	\Sq(d) &= d + b^2 + \Sq^3(d) + \Sq^4(d) + d^2.
\end{aligned}
\end{equation}
\end{thm}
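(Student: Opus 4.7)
My approach is to run the Serre spectral sequence of the fibration $B\SU_8 \to BG_8 \to K(\Z/2,2)$, arising from the central extension $1\to \Z/2\to \SU_8\to G_8 \to 1$ whose classifying map $BG_8 \to K(\Z/2,2)$ detects the obstruction to lifting a $G_8$-bundle to an $\SU_8$-bundle. The $E_2$-page is $H^*(K(\Z/2,2);\Z/2) \otimes H^*(B\SU_8;\Z/2)$. By Serre and Cartan, the base factor is a polynomial algebra on admissible Steenrod monomials $\Sq^I\iota_2$, whose generators in total degree $\le 6$ are $\iota_2, \Sq^1\iota_2, \Sq^2\Sq^1\iota_2$; these are destined to become $a,b,d$. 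The fiber factor is $\Z/2[c_2,\dots,c_8]$, whose generators in degree $\le 6$ are $c_2$ and $c_3$, destined to become $c$ and $e$.

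The essential content lies in the differentials. In total degree $\le 6$ the only candidate nonzero transgressions out of the fiber are $d_5(c_2)\in\langle ab,d\rangle\subset H^5(K(\Z/2,2);\Z/2)$ and $d_7(c_3)\in\langle a^2 b, ad\rangle$. I would show both vanish for $G_8$. For the toy case $G=\SO_3=\SU_2/\{\pm 1\}$ one computes $d_5(c_2)=ab+d$, reflecting the Wu relation $\Sq^2 w_3=w_2 w_3$ in $H^*(B\SO_3;\Z/2)$; more generally this transgression is governed by the binomial coefficient $\binom{n}{2}\bmod 2$ arising in the descent formula for $c_2$ along $B\SU_n\to B(\SU_n/\{\pm 1\})$, and $\binom{8}{2}=28$ is even, so $d_5(c_2)=0$ for $G_8$. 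A parallel binomial-coefficient analysis gives $d_7(c_3)=0$. Once both transgressions vanish, $c_2$ and $c_3$ are permanent cycles, lifting to classes $c,e\in H^*(BG_8;\Z/2)$; a dimension count against the $E_2$-page then confirms the named generators and the absence of relations below degree $7$, with no room for hidden multiplicative extensions in this range.

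For the Steenrod squares, the classes $a,b,d$ are pulled back from $K(\Z/2,2)$, so naturality reduces their operations to Adem-relation computations: $\Sq^1 a=b$, $\Sq^2 a=a^2$, $\Sq^2 b=d$, $\Sq^3 b=b^2$, $\Sq^1 b=\Sq^1\Sq^1 a=0$, $\Sq^1 d=\Sq^1\Sq^2\Sq^1 a=\Sq^3\Sq^1 a=b^2$, $\Sq^2 d=\Sq^2\Sq^2\Sq^1 a=\Sq^3\Sq^1\Sq^1 a=0$, and $\Sq^5 d=d^2$. For $c$ and $e$, which restrict to $c_2$ and $c_3$ on the fiber, Wu's formula for Chern classes (using $c_1=0$) gives $\Sq^2 c_2=c_3$ and $\Sq^4 c_2=c_2^2$, lifting to $\Sq^2 c=e$ and $\Sq^4 c=c^2$ on $BG_8$; the remaining operations $\Sq^3 c,\Sq^3 d,\Sq^4 d$ take values in the base subalgebra generated by $a,b,d$ and are absorbed into the unnamed symbols in the statement. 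The main obstacle is the transgression analysis; once $d_5(c_2)=0$ and $d_7(c_3)=0$ are established, the rest of the computation is routine.
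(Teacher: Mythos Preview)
Your overall strategy---running the Serre spectral sequence of $B\SU_8 \to BG_8 \to K(\Z/2,2)$---coincides with one of the spectral sequences the paper uses, and your Adem-relation computations for $a$, $b$, and $d$ are correct. The paper differs in that it first computes the additive structure of $H^*(BG_8;\Z/2)$ via the path-loop fibration $G_8\to\pt\to BG_8$, so it never has to evaluate the transgression $d_5(c_2)$ directly. Your ``$\binom{n}{2}$ governs the transgression'' heuristic is suggestive but is not a proof as written; you would need to exhibit the descent formula or a naturality argument that actually pins down $d_5(c_2)$.

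There is a genuine gap in your treatment of $\Sq^1 c$. The theorem asserts $\Sq^1 c = 0$, but restriction to the fiber only tells you that $\Sq^1 c$ lies in the kernel of $H^5(BG_8;\Z/2)\to H^5(B\SU_8;\Z/2)$, which is all of $H^5(BG_8;\Z/2) = \langle ab, d\rangle$. The relation $\Sq^1\Sq^1 = 0$ narrows this to $\Sq^1 c \in \{0,\, ab+d\}$, and adjusting the lift by $a^2$ does nothing since $\Sq^1(a^2)=0$; so your fiber-restriction method cannot rule out $\Sq^1 c = ab+d$. The paper closes this gap with a separate argument: it runs the Serre spectral sequence for $B\Z/2\to B\SU_8\to BG_8$ with \emph{integral} coefficients to show that $c$ is the mod $2$ reduction of a class in $H^4(BG_8;\Z)$---and here the computation $\binom{8}{2}\equiv 0\bmod 2$ you invoke really does appear, as the value of the pullback $H^4(B\SU_8;\Z)\to H^4(B\Z/2;\Z)$---whence $\Sq^1 c = 0$ because $\Sq^1$ is the Bockstein. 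Your single-spectral-sequence approach needs some such integrality input; restriction to the fiber alone is not enough for this operation.
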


\begin{proof}
We consider the Serre spectral sequence for the fibration $B\Z/2\to B\SU_8 \to BG_8$ with integer coefficients, which has signature
\begin{equation}\label{eq:reverse}
	E_2^{*,*} = H^*(BG_8; H^*(B\Z/2;\Z)) \Longrightarrow H^*(B\SU_8;\Z).
\end{equation}
Since $BG_8$ is simply connected, we do not need to worry about local coefficients. We know that
$H^*(B\Z/2;\Z)\cong\Z[z]/2z$, where $\abs z = 2$, and Borel~\cite[\S 29]{Bor53a} computed $H^*(B\SU_8;\Z)\cong\Z[ c_2,\dotsc,c_8]$, with
$\abs{c_i} = 2i$, so we may run the spectral sequence in reverse. The $E_2$-page for $\eqref{eq:reverse}$ is:
\begin{gather}\label{eq:SSreverse}
\begin{array}{c|ccccccccc}
    6& z^3& 0&0& \alpha z^3& c_2z^3&\beta z^3&(c_3 z^3, \alpha z^3)\\
    5& 0 &0&0 &0&0&0&0\\
     4 & z^2&0& 0& \alpha z^2& c_2 z^2& \beta z^2&(c_3z^2 , \alpha^2z^2)   \\
     3& 0 &0&0 &0&0&0&0\\
     2& z & 0&0& \alpha z & c_2 z& \beta z& (c_3z ,\alpha^2z) \\
     1& 0 &0&0& 0&0&0&0\\
     0 & 1& 0 & 0 & \alpha & c_2 & \beta & (c_3, \alpha^2)\\
     \hline 
     & 0 & 1 & 2 & 3 & 4 & 5&6 \,.
\end{array}
\end{gather}
As $H^2(B\SU_8;\Z) = 0$, $z\in E_2^{0,2} = H^2(B\Z/2;\Z)$ admits a differential. The only option is a transgressing
$d_3$; let $\alpha\coloneqq d_3(z)$. Since $2z = 0$, $2\alpha = 0$. The Leibniz rule (now with signs) tells us
\begin{equation}
	d_2(z^2) = zd_2(z) + d_2(z)z = 2\alpha z = 0.
\end{equation}
Therefore if $z^2$ admits a differential, the differential must be the transgressing $d_5\colon E_4^{0,4}\to
E_4^{5,0}$, see \eqref{eq:SSreverse}. But $z^2$ does admit a differential. One way to see this is to compute the pullback $H^4(B\SU_8;\Z)\to
H^4(B\Z/2;\Z)$. Since $H^4(B\SU_8;\Z)$ is generated by $c_2$ of the defining representation $\C^8$,
we can restrict
that representation to $\Z/2$ and compute its second Chern class to compute the pullback map. As a representation
of $\Z/2$, $\C^8$ is a direct sum of $8$ copies of the sign representation, so its total Chern class is $c(8\sigma)
= (1 + z)^8$ by the Whitney sum rule, and the $z^2$ term is $\binom 82 z^2$, which is even. Since $2z^2 = 0$, this
implies $c_2$ pulls back to $0$. If $z^2$ did not support a differential, then it would be in the image of this
pullback map, so we have discovered that $z^2$ admits a differential, specifically $d_5$. Let $\beta\coloneqq
d_5(z^2)$. From the spectral sequence we see that $H^4(BG_8; \Z)$ is isomorphic to $H^4(B \SU_8; \bZ)$, and $c_2$ is an element in this cohomology. By using the mod 2 reduction map from $H^4(B \SU_8; \bZ)\to H^4(B \SU_8; \bZ/2)$, and the pullback map induced from $\SU_8\overset{f}{\to} G_8$ we see that $c$ is the mod 2 reduction of  $c_2$ in $ H^4(BG_8;\Z)$. This is summarized in the following diagram:
\begin{equation}
    \begin{tikzcd}
    {c_2\in H^4(BG_8; \Z) } {\arrow[rr, "\mod 2"]}  {\arrow[d, "\cong"]}  &  & {H^4(BG_8; \Z/2)} \arrow[d,"f^*"]  \\
{H^4(B\SU_8; \bZ)} \arrow[rr]                                                                                &  & {H^4(B\SU_8; \bZ/2)}\,.                                      
\end{tikzcd}
\end{equation}
Let $\tilde{c}_2$ be the class in $H^4(B\SU_8;\Z)$ given by the equivalence in the above diagram, and $\overline{{c}}_2$ be the mod 2 reduction with $f^*: c \mapsto \overline{c}_2$. We see that $\Sq^2(\overline{c}_2) = \overline{c}_3$ in $H^6(B\SU_8;\Z/2)$ and therefore get that $e:=\Sq^2(c)$ maps to $\overline{c}_3$. We now consider the Serre spectral sequence for the same fibration but with $\Z/2$ coefficients with signature 
\begin{equation}\label{eq:reverseWithZ2}
   	E_2^{*,*} = H^*(BG_8; H^*(B\Z/2;\Z/2)) \Longrightarrow H^*(B\SU_8;\Z/2).
\end{equation}
The $E_2$-page for \eqref{eq:reverseWithZ2} is:
\begin{gather}\label{eq:SSreverseWithZ2}
\begin{array}{c|ccccccccc}
    5& w^5_1 \\
     4 & w_1^4  \\
     3& w_1^3 \\
     2& w_1^2 \\
     1& w_1 \\
     0 & 1& 0 & a & b & \overline{c}_2 & d & \overline{c}_3\\
     \hline 
     & 0 & 1 & 2 & 3 & 4 & 5&6 \,.
\end{array}
\end{gather}
The classes $\overline{c}_2$ and $\overline{c}_3$ must be detected on the $q=0$ row. As for the rest of the classes in the $p=0$ column we start with $w_1$ which supports a differential so that $d_2(w_1):=a$. Furthermore, $w_1^2$ must also transgress so that $d_3(w_1^2):=b$. We now apply Kudo's transgression theorem \cite{Kud56} which states that Steenrod squares commute with transgression in the Serre spectral sequence to futhermore see that $d_3(w_1^2) = d_3(\Sq^1(w_1))=\Sq^1(a)$. Since we formerly showed that $\overline{c}_2$ already exists, $w_1^4$ must also transgress so that $d_4(w_1^4) = d_3(\Sq^2(w_1^2)) = \Sq^2\Sq^1(a) = \Sq^2(b) =: d $. Then using the Adem relation we get that $\Sq^1(d) = b^2$, and this concludes the proof.
\end{proof}

We now compute $H^*(B(\Spin\text{-}\SU_8);\Z/2)$, which is what  we actually need for U-duality. Recall from
\cref{which_cext} that the central extension
\begin{equation}
\label{Z2F_quotient}
    \shortexact{\Z/2}{\Spin\text{-}\SU_8}{\SO\times G_8},
\end{equation}
which we think of physically as ``quotienting by fermion parity,'' is classified by $w_2 + a\in H^2(B(\SO\times
G_8);\Z/2)$.

Taking classifying spaces, we obtain a fibration
\begin{equation}\label{eq:extensionseq}
    B\bZ/2 \longrightarrow B (\Spin\text{-}\SU_8) \longrightarrow B (\SO \times G_8)\,,
\end{equation}
and we apply the Serre spectral sequence to this fibration
using knowledge of the cohomology of $BG_8$. 

\begin{thm}\label{thm:cohHstructure}
The cohomology ring for $H^*(B(\Spin\text{-}\SU_8); \Z/2)$ considering generators in degree 6 and below is given by  $H^*(B(\Spin\text{-}\SU_8); \Z/2) \cong \bZ/2[a,b,c,w_4,d,e,\ldots]$ with $\abs a = 2$, $\abs b = 3$, $\abs c =
4$, $\abs{w_4} = 4$, $\abs d = 5$, and $\abs e = 6$. 
The map $\Spin\text{-}\SU_8 \rightarrow \SO \times G_8$ induces a quotient map on cohomology with the identification $a=w_2$, $b=w_3$, $ab+d=w_5$ where $w_i$ are the Stiefel-Whitney classes of $B\SO$. 
The Steenrod squares of $a$, $b$, $c$, $d$, and $e$ are given in \eqref{eq:SteenrodSU8/Z2} along with 
\begin{equation}\label{eq:Steenrodw4}
\begin{aligned}
       \Sq^1 w_4 &= ab+d\,,\\
       \Sq^2 w_4 &= a w_4+ x,
\end{aligned}
\end{equation}
where $x$ is a class in $H^6(B(\Spin\text{-}\SU_8);\Z/2)$ that is linearly independent from $aw_4$.
\end{thm}
\begin{proof}
We run the Serre spectral sequence with signature 
\begin{equation}
\label{another_Serre}
    E^{*,*}_2  = H^*(B (\SO \times G_8); H^*(B\bZ/2;\bZ/2)) \Longrightarrow H^*(B (\Spin\text{-}\SU_8); \bZ/2)\,,
\end{equation}
where the $E_2$-page is given by:
\begin{gather}\label{eq:tangential}
\begin{array}{c|ccccccccc}
     5 & t^5&0 \\
     4 & t^4 &0  \\
     3& t^3 &0 \\
     2& t^2 &0&\!\!\!\!(t^2a,t^2a+t^2w_2)& \ldots \\
     1& t &0 & \!\!\!\!\!\!\!(ta,ta\!+\!tw_2)& \!\!\!\!\!\!\!\!(tb,tb\!+\!tw_3)& \ldots \\
      0 & 1& 0 & \!\!\!\!\!(a,a\!+\!w_2) & \!\!\!\!\!\!\!\!(b,b\!+\!w_3) & \!\!\!\!\!\!\begin{pmatrix}a^2,c,w_4,a^2\!+\!w^2_2,\\
      a(a+w_2) \end{pmatrix} & \!\!\!\begin{pmatrix}ab\!+\!d\!+\!w_5,(a\!+\!w_2)(b\!+\!w_3)\\ a(b+w_3),
      b(a+w_2)\end{pmatrix}  \\ 
     \hline 
     & 0 & 1 & 2 & 3 & 4 & 5  \,.
\end{array}
\end{gather}
The $w_i$ are the Stiefel-Whitney classes of $B\SO$, and $t$ is the generator of the cohomology $H^*(B\bZ/2; \bZ/2)$. The differential $d_2:E^{0,1}_2 \to E^{2,0}_2$ hits the class for the extension~\eqref{eq:extensionseq} that gives $\Spin\text{-}\SU_8$,
which is $a+w_2$, and identifies $a=w_2$. Applying the Leibniz rule shows $d_2(t^{2n+1}) = t^{2n}a$, and that $d_2(t^{2n}) = 0$.
We apply Kudo's transgression theorem to see
therefore $d_3:E^{0,2}_3 \to E^{3,0}_3 $ sends $t^2 \mapsto b+w_3$, since the transgressing $d_3$ sends $\Sq^1 t=t^2$ to $\Sq^1 (a+w_1)$.
In total degree 4, there is a likewise $d_5$ differential that takes $t^4$ to $ab+d+w_5$, i.e.\ this differential takes $\Sq^2 t^2$ to $\Sq^2(b+w_3)$.\footnote{Another way to see that $d_5$ identifies $w_5$ with $ab+d$ is from the point of view of $\cA$-modules. As we will see in the proof of \cref{the_bordism_groups}, the only way to have the $\cA$-module in \cref{fig:modulefromU} is if $\Sq^1 (w_4 U) = w_5U$ agrees with $\Sq^2( bU) = (ab+d)U$ }
We see that there is a new class $w_4$ which pulled back from $B\SO$. Applying the Wu formula then establishes the formulas in \eqref{eq:Steenrodw4}, where $x$ is the pullback of $w_6$ by the map $B(\Spin\text{-}\SU_8)\to B\SO$. We still need to show that $x$ and $aw_4$ are linearly independent. In $H^6(B\SO;\Z/2)$, $w_2w_4$ and $w_6$ are linearly independent, so the only way their images in $H^6(B(\Spin\text{-}\SU_8);\Z/2)$ could be linearly dependent is if a differential in the Serre spectral sequence~\eqref{another_Serre} hits $w_6$ or $w_2w_4 + w_6$. As we showed above, the Leibniz rule means all differentials in this spectral sequence are determined by the differentials out of $E_r^{0,q}$, so it suffices to understand these differentials for $0\le q\le 5$, so as to land in total degree $6$. We have already calculated all of these differentials in this proof, and using their values and the Leibniz rule, we see that both $w_6$ and $w_2w_4 + w_6$ survive to the $E_\infty$-page, implying as claimed that $x$ and $aw_4$ are linearly independent.
\end{proof}
\begin{cor}
\label{get_it_all_from_SO}
The composition
\begin{equation}\label{from_BSO}
    H^k(B\SO;\Z/2)\to H^k(B\SO\times BG_8;\Z/2) \to H^k(B(\Spin\text{-}\SU_8);\Z/2)
\end{equation}
is injective for $k\le 6$, and an isomorphism for $0\le k\le 3$ and $k = 5$; for $k = 4$, a complementary subspace to the image of~\eqref{from_BSO} is spanned by $c$, and for degree $6$, a complementary subspace is spanned by $\{e, ac\}$.
\end{cor}
\begin{proof}
We saw in \cref{thm:cohHstructure} that if we quotient by all elements in degrees $7$ and above, the pullback map
\begin{equation}\label{quotient_inj}
    H^*(B\SO\times BG_8;\Z/2) \to H^*(B(\Spin\text{-}\SU_8);\Z/2)
\end{equation}
is surjective with kernel the ideal generated by $\{w_2+a, w_3+b, w_5 + ab + d\}$. Therefore the images of the classes $w_2$, $w_3$, $w_4$, $w_5$, and $w_6$, together with all products of these classes in degrees $6$ and below, are linearly independent in $H^*(B(\Spin\text{-}\SU_8);\Z/2)$. As these classes span the subspace of $H^*(B\SO;\Z/2)$ in degrees $6$ and below, we have proven the lemma.
\end{proof}
\begin{cor}\label{all_SO_2}
The pullback map
\begin{equation}\label{all_SO_but_Thom}
    H^k(\MTSO; \Z/2)\rightarrow H^k(\mathit{MT}(\Spin\text{-}\SU_8); \Z/2),
\end{equation}
is also injective in degrees $6$ and below and an isomorphism in degrees $0\le k\le 3$ and $k = 5$; complementary subspaces to the image of~\eqref{all_SO_but_Thom} are spanned by $Uc$ for $k = 4$ and $\{Uac, Ue\}$ for $k =6$.
\end{cor}
\begin{proof}
This follows from \cref{get_it_all_from_SO} using the naturality of the Thom isomorphism.
\end{proof}

\subsection{The Adams Computation}\label{subsec:AdamsComputation}
In this section, we run the Adams spectral sequence for Spin-$\SU_8$ bordism.
\begin{thm}
\label{the_bordism_groups}
Up to degree 5, the first few groups of ${\Spin\text{-}\SU_8}$ bordism are
\begin{equation}
\begin{aligned}
	\Omega_0^{\Spin\text{-}\SU_8} &\cong \Z\\
	\Omega_1^{\Spin\text{-}\SU_8} &\cong 0\\
	\Omega_2^{\Spin\text{-}\SU_8} &\cong 0\\
	\Omega_3^{\Spin\text{-}\SU_8} &\cong 0\\
	\Omega_4^{\Spin\text{-}\SU_8} &\cong \Z^2\\
	\Omega_5^{\Spin\text{-}\SU_8} &\cong \Z/2.
\end{aligned}
\end{equation}
Treating $d\in H^5(BG_8;\Z/2)$ as a characteristic class, the bordism invariant $(M, P)\mapsto
\int_M d(P)\in\Z/2$ realizes the isomorphism $\Omega_5^{\Spin\text{-}\SU_8}\to\Z/2$.
\end{thm}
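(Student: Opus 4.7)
The plan is to run the Adams spectral sequence~\eqref{the_Adams_sig} in stems $t-s\le 5$. Via the Thom isomorphism, $\tilde H^*(\mathit{MT}(\Spin\text{-}\SU_8);\Z/2)\cong H^*(B(\Spin\text{-}\SU_8);\Z/2)\cdot U$ with twisted Steenrod action $\Sq^i U = w_i U$. Using \cref{thm:cohHstructure} together with the identifications $w_2 = a$, $w_3 = b$, $w_4 = w_4$, $w_5 = ab+d$, the Cartan formula pins down the $\cA$-module structure in all relevant degrees. The values I will use most are $\Sq^1 U = 0$, $\Sq^2 U = aU$, $\Sq^3 U = bU$, $\Sq^4 U = w_4 U$, $\Sq^5 U = (ab+d)U$, and crucially $\Sq^1(dU) = b^2 U$ (coming from $\Sq^1 d = b^2$).

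Next I extract the beginning of a minimal free $\cA$-resolution $P_0\twoheadrightarrow \tilde H^*(\mathit{MT}(\Spin\text{-}\SU_8);\Z/2)$ by identifying the $\cA$-module indecomposables in degrees $\le 5$. I expect exactly four: $U$ in degree $0$; $a^2 U$ and $cU$ in degree $4$ (neither is hit by $\Sq^I$ from below, since the only degree-$4$ image is $\Sq^4 U = w_4 U$ and $\Sq^2\Sq^2 U = \Sq^3\Sq^1 U = 0$); and a single indecomposable in degree $5$, which I may take to be $dU$ (the image of $\cA_+\cdot M$ in $M_5$ is spanned by the common value $\Sq^5 U = \Sq^2(bU) = \Sq^1(w_4 U) = (ab+d)U$, leaving a one-dimensional cokernel). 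The minimal relations in low degree are then $\Sq^1 g_0$, $\Sq^1 g_{4,1}$, $\Sq^1 g_{4,2}$, in degrees $1,5,5$. Critically, $\Sq^1 g_5$ is \emph{not} a relation, because $\Sq^1(dU) = b^2 U\ne 0$ in $M$, so $P_1$ has no generator in degree $6$ stemming from $g_5$.

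Reading this off on the Adams chart in stems $t-s\le 5$, I expect an $h_0$-tower based at $(0,0)$, nothing in stems $1,2,3$, two $h_0$-towers based at $(4,0)$ (whose $(4,1)$ classes come from the relations $\Sq^1 g_{4,i}$), and an isolated $\Z/2$ at $(5,0)$ with $h_0\cdot g_5 = 0$. This predicts $\Omega_0\cong \Z$, $\Omega_1 = \Omega_2 = \Omega_3 = 0$, $\Omega_4 \cong \Z^2$, and $\Omega_5 \cong \Z/2$, modulo Adams differentials.

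The one worrying differential is $d_2\colon E_2^{0,5}\to E_2^{2,6}$, which could land on $h_0^2\cdot g_{4,i}$ and truncate one of the stem-$4$ $h_0$-towers into $\Z/4$. Ruling this out is the main obstacle; my plan is to pin down $\Omega_4$ directly by exhibiting two $\Z$-independent classes via standard characteristic-number invariants (integrating $c_2$ of the $\SU_8$-bundle together with a Pontryagin-type invariant of the tangent bundle). Once $\Omega_4$ has rank $\ge 2$ no such $d_2$ is possible, $g_5$ survives to $E_\infty$, and the $\Z/2$ at $(5,0)$ yields $\Omega_5\cong \Z/2$. Finally, since $d\in H^5(BG_8;\Z/2)$ pulls back to $B(\Spin\text{-}\SU_8)$, the assignment $(M,P)\mapsto \int_M d(P)\in\Z/2$ is a well-defined bordism invariant pairing nontrivially with the class dual to $g_5$, giving the claimed isomorphism $\Omega_5^{\Spin\text{-}\SU_8}\xrightarrow{\cong}\Z/2$.
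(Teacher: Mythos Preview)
Your $E_2$-page computation is correct and essentially matches the paper's: you find the same four $\cA$-indecomposables ($U$, $a^2U$, $cU$, $dU$) and the same low-degree relations, yielding $h_0$-towers in stems $0$ and $4$ and an isolated $\Z/2$ at $(5,0)$. The paper arrives at the identical chart by splitting the $\cA$-module (truncated above degree $6$) as $(\cA\otimes_{\cA(0)}\Z/2) \oplus \Sigma^4(\cA\otimes_{\cA(0)}\Z/2) \oplus \Sigma^4 C\eta \oplus \Sigma^5\cA$ and computing $\Ext$ summand by summand; this is equivalent to your direct minimal-resolution bookkeeping.

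Where you diverge is in disposing of the differential out of $g_5$. You propose to exclude it by independently exhibiting two $\Z$-valued characteristic-number invariants on $\Omega_4$; this would work, but it is left as a plan rather than carried out, and as phrased you name only $d_2$ (though your rank-of-$\Omega_4$ argument would in fact exclude every $d_r$ into the stem-$4$ towers). The paper's route is shorter and requires no external input: Adams differentials are $h_0$-linear, $h_0\cdot g_5 = 0$ (this is exactly your observation that $\Sq^1(dU)=b^2U\ne 0$, so $\Sq^1 g_5$ is not a relation), and $h_0$ acts injectively on both stem-$4$ towers. Hence every $d_r(g_5)$ is $h_0$-torsion in a column with no $h_0$-torsion, so vanishes. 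This kills all $d_r$ at once with no detour through $\Omega_4$. You should also state explicitly, as the paper does, that $g_5$ cannot be the \emph{target} of any differential: it sits in filtration $0$, while Adams differentials raise filtration by at least $2$, so there is nothing to check from the $6$-line.
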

Before we begin the proof of \cref{the_bordism_groups}, we need a few lemmas about $\cA$-modules.
\begin{defn}
Let $C\eta$ denote the $\cA$-module $\Sigma^{-2}\widetilde H^*(\mathbb{CP}^2;\Z/2)$,\footnote{This $\cA$-module is
called $C\eta$ because it is the cohomology of the cofiber of the Hopf map $\eta\colon\Sigma\mathbb S\to\mathbb S$,
the nontrivial element of $\pi_1(\mathbb S)$.}
\end{defn}
\begin{lem}
\label{sp_SU_coh}
If one quotients by all
elements in degrees $7$ and above, there is an isomorphism of $\cA$-modules
\begin{equation}\label{eq:MTspin-su8}
	H^*(\mathit{MT}(\Spin \text{-} \SU_8);\Z/2) \cong \textcolor{BrickRed}{\cA\otimes_{\cA(0)} \bZ/2} \oplus
		\textcolor{Green}{\Sigma^4 (\cA\otimes_{\cA(0)} \bZ/2)} \oplus
		\textcolor{MidnightBlue}{\Sigma^4 C\eta} \oplus
		\textcolor{Fuchsia}{\Sigma^5\cA} \oplus P,
\end{equation}
where $P$ contains no nonzero elements in degrees $5$ and below. $U$ generates $\textcolor{BrickRed}{\cA\otimes_{\cA(0)} \bZ/2}$, $Ua^2$ generates $\textcolor{Green}{\Sigma^4 (\cA\otimes_{\cA(0)} \bZ/2)}$, $Uc$ generates $\textcolor{MidnightBlue}{\Sigma^4 C\eta}$.
\end{lem}
\begin{proof}
In this proof, we implicitly quotient graded rings by their ideals of elements in degrees $7$ and above.

Because~\eqref{all_SO_but_Thom} is so close to an isomorphism, it tells us almost all of the $\cA$-module structure we need in terms of the $\cA$-module structure on $H^*(\MTSO; \Z/2)$. Wall~\cite{Wal60} showed that, localized at $p = 2$, $\MTSO$ is a direct sum of Eilenberg-Mac Lane spectra
\begin{equation}
\MTSO_{(2)}\overset\simeq\longrightarrow \textcolor{BrickRed}{H\Z} \vee\textcolor{Green}{\Sigma^4 H\Z} \vee \textcolor{Fuchsia}{\Sigma^5 H\Z/2} \vee \dotsb
\end{equation}
where all unlisted summands are at least $7$-connected, hence do not matter in our calculation. Therefore on cohomology we have
\begin{equation}\label{eq:spectraMTSO}
    H^*(\MTSO;\bZ/2) \cong \textcolor{BrickRed}{H^*(H\bZ)}\oplus \textcolor{Green}{\Sigma^4H^*(H\bZ)} \oplus \textcolor{Purple}{\Sigma^5 H^*(H\bZ/2)} \oplus \ldots\,.
\end{equation}
Serre~\cite{Ser53} showed that $H^*(H \bZ) = \cA\otimes_{\cA(0)} \Z/2$ and $H^*(H \Z/2) = \cA$.

By \cref{all_SO_2}, $\mathcal S\coloneqq \textcolor{BrickRed}{\cA\otimes_{\cA(0)} \bZ/2} \oplus \textcolor{Green}{\Sigma^4 (\cA\otimes_{\cA(0)} \bZ/2)} \oplus \textcolor{Fuchsia}{\Sigma^5\cA}$ is an $\cA$-submodule of $H^*(\mathit{MT}(\Spin \text{-} \SU_8);\Z/2)$; we want to show that $\mathcal S$ is a direct summand, and compute Steenrod squares on a complementary subspace.

To show that $\mathcal S$ is a direct summand, it suffices to compute Steenrod squares on a generating set for a complementary subspace, such as $\{Uc, Ue, Uac\}$, and show that those Steenrod squares do not land in $\mathcal S\setminus 0$. For $Ue$ and $Uac$ this is vacuous: their squares land in degrees $7$ or higher, which we have killed. Similarly, for $Uc$, we only need to compute $\Sq^1$ and $\Sq^2$. We can use the formula for Steenrod squares in a Thom spectrum
\cite[Remark 3.3.5]{BC18}\footnote{The appearance of $-V$, rather than $V$, is because we are studying bordism of spin-$\SU_8$ structures on tangent bundles, rather than on stable normal bundles. See Footnote~\ref{MT_vs_M}.}
\begin{equation}\label{eq:steenrodU}
	\Sq^k(Ux) = \sum_{i=0}^k \Sq^i(U)\Sq^{k-i}(x) = \sum_{i=0}^k Uw_i(-V)\,\Sq^{k-i}(x),
\end{equation}
where $-V\to B(\Spin\text{-}\SU_8)$ is the pullback of the inverse of the tautological vector bundle on $B\SO$. Since $w_1(-V) = 0$ and $\Sq^1(c) = 0$, $\Sq^1(Uc) = 0$; since $w_2(-V) = a$ and $\Sq^2(c) = e$, $\Sq^2(Uc) = U(ac+e)\ne 0$. Therefore  $\mathcal S$ is indeed a direct summand, and the subspace spanned by $\{Uc, U(ac+e)\}$ spans the promised $\textcolor{MidnightBlue}{\Sigma^4 C\eta}$ and $Ue$ spans $P$.
\end{proof}
\begin{lem}
\label{ceta_ext}
$\Ext_\cA(C\eta)$ is isomorphic to $\Z/2[h_0]$ in topological
degree $0$ and vanishes in topological degree $1$.
\end{lem}
\begin{proof}
We use a standard technique: $C\eta$ is part of a short exact sequence of $\cA$-modules
\begin{equation}
\label{ceta_xtn}
    \shortexact{\textcolor{RubineRed}{\Sigma^2 \Z/2}}{C\eta}{\textcolor{Periwinkle}{\Z/2}},
\end{equation}
and a short exact sequence of $\cA$-modules induces a long exact sequence of Ext groups. It is conventional to draw this as if on the $E_1$-page of an Adams-graded spectral sequence; see~\cite[\S 4.6]{BC18} for more information and some additional examples. We draw the short exact sequence~\eqref{ceta_xtn} in \cref{Ceta_LES}, left, and we draw the induced long exact sequence in Ext in \cref{Ceta_LES}, right. Looking at this long exact sequence, there are three boundary maps that could be nonzero in the range displayed; because boundary maps commute with the $\Ext_\cA(\Z/2)$-action, these boundary maps are all determined by \begin{equation}
    \partial\colon\Ext_\cA^{0, 2}(\textcolor{RubineRed}{\Sigma^2 \Z/2})\to \Ext^{1,2}_\cA(\textcolor{Periwinkle}{\Z/2}).
\end{equation}
This boundary map is either $0$ or an isomorphism, and it must be an isomorphism, because
\begin{equation}
    \Ext_\cA^{0, 2}(C\eta) = \Hom_\cA(C\eta, \Sigma^2\Z/2) = 0,
\end{equation}
and if the boundary map vanished, we would obtain $\Z/2$ for this Ext group. Thus we know $\Ext_\cA(C\eta)$ in the range we need.
\end{proof}

\begin{figure}[h!]
\begin{subfigure}[c]{0.4\textwidth}
\begin{tikzpicture}[scale=0.6]
        \sqtwoR(0, 0);
        \begin{scope}[RubineRed]
                \foreach \x in {-4, 0} {
                        \tikzpt{\x}{2}{}{};
                }
                \draw[->, thick] (-3.75, 2) -- (-0.5, 2);
        \end{scope}
        \begin{scope}[Periwinkle]
                \foreach \x in {0, 4} {
                        \tikzpt{\x}{0}{}{};
                }
                \draw[->, thick] (0.5, 0) -- (3.75, 0);
        \end{scope}
        \node[below=2pt] at (-4, 0) {$\Sigma^2\Z/2$};
        \node[below=2pt] at (0, 0) {$C\eta$};
        \node[below=2pt] at (4, 0) {$\Z/2$};
\end{tikzpicture}
\end{subfigure}
\hfill
\begin{subfigure}[c]{0.55\textwidth}
\includegraphics[width=.7\textwidth]{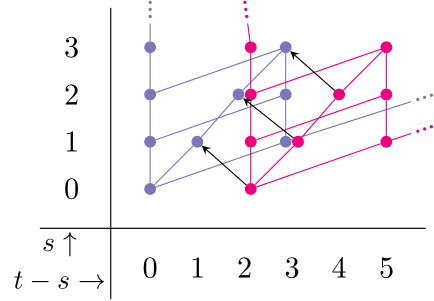}
\end{subfigure}
\caption{Left: the short exact sequence~\eqref{ceta_xtn}. Right: the induced long exact sequence in Ext groups.
These diagrams are part of the proof of \cref{ceta_ext}.}
\label{Ceta_LES}
\end{figure}

\begin{proof}[Proof of \cref{the_bordism_groups}]
To compute the $E_2$-page of the Adams spectral sequence we need to know $\Ext$ of each summand in
\eqref{eq:MTspin-su8} ($\Ext(\text{--})$ means $\Ext_{\cA}^{*,*}(\text{--};\Z/2)$.) Quotienting by elements in degrees $7$ and above does not change the Ext groups in topological degrees less than $6$; one way to see this is to run the long exact sequence in Ext groups induced by the short exact sequence
\begin{equation}
    \shortexact{\tau_{\ge 7}M}{M}{M/\tau_{\ge 7}M},
\end{equation}
where $M$ is an $\cA$-module and $\tau_{\ge 7}M$ is the submodule of $M$ of classes in degree $7$ and above. Therefore, since we are only interested in topological degrees $5$ and below, we need not distinguish $M$ from $M/\tau_{\ge 7}M$ for the purpose of computing Ext groups.

By using the change of rings
theorem \cite[Section 4.5]{BC18}, we get $\Ext_{\cA}(\cA\otimes_{\cA(0)}\Z/2, \Z/2) = \Ext_{\cA(0)}(\Z/2,\Z/2)$, and since $\cA(0)$ only includes $\Sq^1$, this just gives $\Z/2[h_0]$ \cite[Remark 4.5.4]{BC18}, where $h_0\in\Ext^{1,1}$. The same logic applies to $\Ext_\cA(\textcolor{Green}{\Sigma^4(\cA\otimes_{\cA(0)}\Z/2)})$, and $\Ext_\cA(\textcolor{Fuchsia}{\Sigma^5\cA})$ contributes a $\bZ/2$ in degree 5. 

\begin{figure}
    \begin{subfigure}[c]{0.35\textwidth}
\begin{tikzpicture}[scale=0.6, every node/.style = {font = \tiny}]
	\foreach \y in {0, ..., 7} {
		\node at (-2, \y) {$\y$};
	}
	\begin{scope}[BrickRed]
		\Mzero{2}{0}{$U$};
		\tikzpt{2}{4}{}{};
		\tikzpt{2}{6}{}{};
		\tikzpt{2}{7}{}{};
		\draw (2,3.7) node {$Uw_4$};
		\draw (2,5.3) node {$\alpha$};
		\draw (2,4)--(2.5,4)--(2.5,0)--(2,0);
  \draw (2,4+2)--(2.5+.3,4+2)--(2.5+.3,0+2)--(2,0+2);
        \draw (2,7)--(1.5-.4,7)--(1.5-.4,3)--(2,3);

		\sqone(2, 4);
	       \sqtwoR(2, 4);
	       	\sqone(2, 6);
	\end{scope}
\end{tikzpicture}
\end{subfigure}
    \caption{The only relevant higher Steenrod operation in this degree is $\Sq^4$, which acts on $U$ to give $Uw_4$. This is connected to $\alpha=(ab+d)U$ by $\Sq^1$. (This module continues past degree $7$, but we do not need any higher-degree information.)}
    \label{fig:modulefromU}
\end{figure}

Compiling the information of $\Ext$ on \eqref{eq:MTspin-su8} we draw the $E_2$-page
of the Adams spectral sequence through topological degree $5$ in \cref{the_adams_bit}.
\begin{figure}[h!]
\includegraphics[width=.35\textwidth]{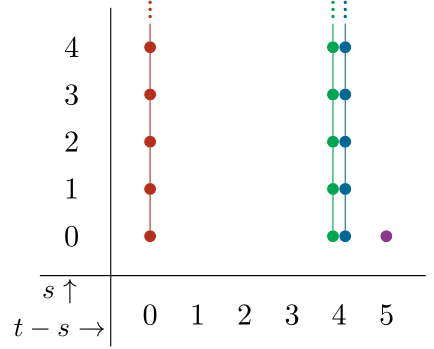}
\caption{The $E_2$-page of the Adams spectral sequence computing
$\Omega_*^{\Spin \text{-} \SU_8}$\,. }
\label{the_adams_bit}
\end{figure}

In this range, the only differentials that could be nonzero go from the $5$-line to the $4$-line. Usually we would
need to know the $6$-line in order to determine if there are any differentials from the $6$-line to the $5$-line,
so that we could evaluate $\Omega_5^{\Spin \text{-} \SU_8}$, but the $5$-line is concentrated in filtration zero, and all
Adams differentials land in filtration $2$ or higher, so what we have computed is good enough.

Returning to the differentials from the $5$-line to the $4$-line: Adams
differentials must commute with the action of $h_0$ on the $E_r$-page, and $h_0$ acts by $0$ on the $5$-line but
injectively on the $4$-line, so these differentials must also vanish. Thus the spectral sequence collapses giving
the bordism groups in the theorem statement. The fact that $\Omega_5^{\Spin\text{-}\SU_8}\cong\Z/2$ is detected by $\int d$
follows from the fact that its image in the $E_\infty$-page is in Adams filtration zero, corresponding to Ext of
the free $\textcolor{Fuchsia}{\Sigma^5\cA}$ summand generated by $Ud$; see~\cite[\S 8.4]{FH21}.
\end{proof}


\subsection{Determining the Manifold Generator}
\label{mf_gen}

We now determine the generator of~$\Omega_5^{\Spin\text{-}\SU_8}\cong\Z/2$. We start by considering a map $\widetilde\Phi\colon\SU_2\to\SU_8$ sending a matrix $A$ to its fourfold block sum $A\oplus A\oplus A\oplus A$. This sends $-1\mapsto -1$, so $\widetilde\Phi$ descends to a map
\begin{equation}
	\Phi\colon \SO_3 = \SU_2/\set{\pm 1}\longrightarrow \SU_8/\set{\pm 1} = G_8.
\end{equation}
Recall that $H^*(B\SO_3;\Z/2)\cong\Z/2[w_2, w_3]$ and that there are three classes $a$, $b$, and $d$ in \phantom{}
$H^*(B(\SU_8/\set{\pm
1});\Z/2)$.
\begin{lem}
\label{pullback_SO3}
$\Phi^*(a) = w_2$, $\Phi^*(b) = w_3$, and $\Phi^*(d) = w_2w_3$.
\end{lem}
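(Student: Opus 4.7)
The plan is to reduce all three identities to the single claim $\Phi^*(a) = w_2$; the relations $b = \Sq^1 a$ and $d = \Sq^2 b$ visible from \eqref{eq:SteenrodSU8/Z2}, together with naturality of Steenrod operations, reduce the other two pullbacks to Wu-formula computations in $H^*(B\SO_3;\Z/2) = \Z/2[w_2, w_3]$, using that the tautological $\SO_3$-bundle is oriented and has rank $3$.

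For the key step $\Phi^*(a) = w_2$, recall that $a$ classifies the central extension $1 \to \set{\pm 1} \to \SU_8 \to G_8 \to 1$ and $w_2$ classifies $1 \to \set{\pm 1} \to \SU_2 \to \SO_3 \to 1$. Since $H^2(B\SO_3;\Z/2) = \Z/2 \cdot w_2$, it suffices to rule out $\Phi^*(a) = 0$, i.e., that $\Phi$ admits a group-theoretic lift $\widetilde\Phi'\colon \SO_3 \to \SU_8$. Suppose such a lift exists, and let $q\colon \SU_2 \to \SO_3$ be the covering. Then $\widetilde\Phi' \circ q$ is a second lift of $\Phi \circ q\colon \SU_2 \to G_8$ along $\SU_8 \to G_8$. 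Since $\SU_2$ is simply connected, any two such lifts as group homomorphisms agree (they differ by a continuous map $\SU_2 \to \set{\pm 1}$, necessarily constant and trivial at the identity), so $\widetilde\Phi = \widetilde\Phi' \circ q$. Evaluating at $-1 \in \SU_2$ yields $-1 = \widetilde\Phi(-1) = \widetilde\Phi'(q(-1)) = \widetilde\Phi'(1) = 1$, a contradiction.

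Given $\Phi^*(a) = w_2$, naturality of Steenrod squares and the Wu formula give
\begin{equation*}
    \Phi^*(b) = \Sq^1 w_2 = w_3 + w_1 w_2 = w_3,
\end{equation*}
using $w_1 = 0$, and then
\begin{equation*}
    \Phi^*(d) = \Sq^2 \Sq^1 w_2 = \Sq^2 w_3 = w_2 w_3 + w_1 w_4 + w_5 = w_2 w_3,
\end{equation*}
using additionally $w_4 = w_5 = 0$ in $H^*(B\SO_3;\Z/2)$, since the tautological bundle has rank $3$. The only real content is the central-extension/uniqueness-of-lift argument establishing $\Phi^*(a) = w_2$; the rest is mechanical.
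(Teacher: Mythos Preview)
Your proof is correct and shares the paper's overall structure: both reduce everything to $\Phi^*(a) = w_2$ via naturality of Steenrod operations and the Wu formula. The difference lies in how $\Phi^*(a) = w_2$ is established. The paper compares the Serre spectral sequences for the fibrations $B\Z/2 \to B\SU_2 \to B\SO_3$ and $B\Z/2 \to B\SU_8 \to BG_8$: in each, the transgression $d_2$ carries the generator of $H^1(B\Z/2;\Z/2)$ to the extension class in $H^2$ of the base, and since the map of spectral sequences is the identity on the fiber line and commutes with differentials, $\Phi^*(a) = w_2$ follows. You instead invoke the interpretation of $a$ and $w_2$ as classifying classes for the central extensions directly, and rule out $\Phi^*(a) = 0$ by showing that a hypothetical lift $\widetilde\Phi'\colon \SO_3 \to \SU_8$ would force $\widetilde\Phi = \widetilde\Phi'\circ q$ (for which, incidentally, connectedness of $\SU_2$ already suffices---simple connectedness is more than you need), yielding a contradiction at $-1$. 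Your route is more elementary and avoids spectral sequences; the paper's argument dovetails with its earlier use of the Serre spectral sequence to define $a$, $b$, $d$ in the first place.
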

This will imply that to find a generator, all we have to do is find a closed, oriented $5$-manifold $M$ with a
principal $\SO_3$-bundle $P\to M$ with $w_2(M) = w_2(P)$ and $w_2(P)w_3(P) \ne 0$. This is easier than directly
working with $G_8$!
\begin{proof}[Proof of \cref{pullback_SO3}]
Once we show $\Phi^*(a) = w_2$, we're done:
\begin{subequations}
\begin{equation}
	\Phi^*(b) = \Phi^*(\Sq^1(a)) = \Sq^1(\Phi^*(a)) = \Sq^1(w_2) = w_3,
\end{equation}
where the last equal sign follows by the Wu formula. In a similar way
\begin{equation}
	\Phi^*(d) = \Phi^*(\Sq^2(b)) = \Sq^2(\Phi^*(b)) = \Sq^2(w_3) = w_2w_3,
\end{equation}
\end{subequations}
again using the Wu formula. So all we have to do is pull back $a$.

Consider the commutative diagram of short exact sequences
\begin{equation}\begin{tikzcd}
	1 & {\Z/2} & {\SU_2} & {\SO_3} & 1 \\
	1 & {\Z/2} & {\SU_8} & {G_8} & {1.}
	\arrow[from=1-1, to=1-2]
	\arrow[from=2-1, to=2-2]
	\arrow[from=1-3, to=1-4]
	\arrow[from=1-4, to=1-5]
	\arrow[from=2-2, to=2-3]
	\arrow[from=2-3, to=2-4]
	\arrow[from=2-4, to=2-5]
	\arrow[from=1-2, to=1-3]
	\arrow[Rightarrow, no head, from=1-2, to=2-2]
	\arrow["\widetilde\Phi", from=1-3, to=2-3]
	\arrow["\Phi", from=1-4, to=2-4]
\end{tikzcd}
\end{equation}
Taking classifying spaces, this shows that the pullback of the fiber bundle $B\Z/2\to B\SU_8\to
BG_8$ along the map $\Phi\colon B\SO_3\to BG_8$ is the fiber bundle
$B\Z/2\to B\SU_2\to B\SO_3$. We therefore obtain a map between the Serre spectral sequences computing the mod $2$
cohomology rings of $B\SU_2$ and $B\SU_8$, and it is an isomorphism on $E_2^{0,*}$, i.e.\ on the cohomology
of the fiber.

Both $B\SU_2$ and $B\SU_8$ are simply connected, so $H^1(\text{--};\Z/2)$ vanishes for both spaces. Therefore in
both of these Serre spectral sequences, the generator of $E_2^{0,1} = H^1(B\Z/2;\Z/2)$ must admit a
differential. The only differential that can be nonzero is the transgressing $d_2\colon E_2^{0,1}\to E_2^{2,0}$; in
$E_2(\SU_8)$, we saw in \eqref{eq:SSreverseWithZ2} that $d_2(w_1) = a$, and in $E_2(\SU_2)$, $d_2(x) = w_2$, because $w_2$ is the
only nonzero element of $E_2^{2,0} = H^2(B\SO_3;\Z/2)$. Since the pullback map of spectral sequences commutes with
differentials, this means $\Phi^*(a) = w_2$ as desired.
\end{proof}

Now let $W\coloneqq\SU_3/\SO_3$, which is a closed, oriented $5$-manifold called the \term{Wu manifold}, and let
$P\to W$ be the quotient $\SU_3\to\SU_3/\SO_3$, which is a principal $\SO_3$-bundle.
\begin{rem}
The name ``Wu manifold'' for $W$ is due to Barden~\cite[\S 1]{Bar65}. The name refers to work of Wu~\cite{Wu50}
(see also Dold~\cite{Dol56}), though the manifold Wu studied is bordant to, but not diffeomorphic to, $W$.
See~\cite[Footnote 9]{LOT21} for more on the history of $W$.
\end{rem}
We will need to know $H^*(W;\Z/2)$ and the Stiefel-Whitney classes $w(TM)$ and $w(P)$. Most of this is in the
literature, but scattered, so we give a complete account in \cref{H2Wu,wu_coh,wu_facts}.
\begin{lem}[{Barden~\cite[Lemma 1.1]{Bar65}}]
\label{H2Wu}
$W$ is connected and simply connected, and $H^2(W;\Z/2)\cong\Z/2$.
\end{lem}
\begin{proof}
Connectivity follows from connectivity of $\SU_3$.

In view of the Hurewicz and universal coefficient theorems, it suffices to show $\pi_1(W) \cong 0$ and
$\pi_2(W)\cong\Z/2$. To show this, use the long exact sequence in homotopy groups associated to the fiber bundle
$\SO_3\to\SU_3\to W$: for $\pi_1(W) = 0$, use exactness at
\begin{equation}
	\dotsb\longrightarrow \pi_1(\SU_3)\longrightarrow \pi_1(W) \longrightarrow \pi_0(\SO_3)\longrightarrow\dotsb,
\end{equation}
together with the fact that $\SO_3$ is connected and $\SU_3$ is simply connected. To show $\pi_2(W) \cong\Z/2$, use
exactness at
\begin{equation}
	\dotsb\longrightarrow \pi_2(\SU_3)\longrightarrow \pi_2(W) \longrightarrow \pi_1(\SO_3)\longrightarrow
	\pi_1(\SU_3)\longrightarrow \dotsb
\end{equation}
and the isomorphisms $\pi_1(\SO_3)\cong\Z/2$, $\pi_1(\SU_3)\cong 0$, and $\pi_2(\SU_3)\cong 0$; the fact that
$\pi_2(G)$ vanishes for $G$ a connected, simply connected, simple Lie group is due to Bott~\cite[Theorem A]{Bot56}.
\end{proof}
\begin{cor}
\label{wu_coh}
$H^*(W;\Z/2)\cong\Z/2[z_2, z_3]/(z_2^2, z_3^2)$ with $\abs{z_2} = 2$ and $\abs{z_3} = 3$. 
\end{cor}
\begin{proof}
This is the unique cohomology ring satisfying both \cref{H2Wu} and Poincaré duality.
\end{proof}
\begin{prop}
\label{wu_facts}
In $H^*(W;\Z/2)$, the Steenrod squares are
\begin{equation}
\begin{aligned}
	\Sq(z_2) &= z_2 + z_3\\
	\Sq(z_3) &= z_3 + z_2z_3,
\end{aligned}
\end{equation}
and the Stiefel-Whitney class is $w(W) = 1 + z_2 + z_3$. Moreover, $w(P) = 1 + z_2 + z_3$. Thus $w_2(M) = w_2(P)$, so $W$ with $G_8$-bundle induced from $P$ has a spin\text{-}$\SU_8$ structure, and $w_2(P)w_3(P)\ne
0$, meaning $(W, P)$ is our sought-after generator of $\Omega_5^{\Spin\text{-}\SU_8}$.
\end{prop}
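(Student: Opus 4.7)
My plan is to compute $H^*(W;\Z/2)$ and its Steenrod structure first, then compute $w(P)$ and $w(TW)$ separately, and finally combine these to identify $(W, \Phi(P))$ as the generator of $\Omega_5^{\Spin\text{-}\SU_8}$.

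For the cohomology, I would run the Serre spectral sequence for the fibration $W \to B\SO_3 \to B\SU_3$ arising from $\SO_3 \hookrightarrow \SU_3$, in reverse: since $H^*(B\SO_3;\Z/2) = \Z/2[w_2, w_3]$ and $H^*(B\SU_3;\Z/2) = \Z/2[c_2, c_3]$ are both known, and $B\SU_3$ has vanishing mod-$2$ cohomology in degrees $1,2,3,5$, the classes $w_2$ and $w_3$ must arise from fiber generators $z_2 \in H^2(W;\Z/2)$ and $z_3 \in H^3(W;\Z/2)$, defined as the fiber restrictions of $w_2$ and $w_3$. In degree $4$, the pullback identity $c_2(V \otimes \C) \equiv w_2(V)^2 \pmod 2$ for real $V$ shows that $c_2$ restricts to $w_2^2$, accounting for all of $H^4(B\SO_3;\Z/2)$ and forcing $z_2^2 = 0$. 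Simple connectivity of $W$ (from $\pi_1(\SU_3) = 0$ and connectedness of $\SO_3$) combined with Poincaré duality then supplies $H^5(W;\Z/2) = \Z/2\{z_2 z_3\}$, completing the ring structure. Naturality of $\Sq$ applied to the Wu relations on $B\SO_3$—namely $\Sq^1 w_2 = w_3$ and $\Sq^2 w_3 = w_2 w_3$, using $w_1 = 0$ and the vanishing of $w_4, w_5$ for rank-$3$ bundles—yields the claimed Steenrod squares on $z_2$ and $z_3$.

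For the characteristic classes, $w(P) = f^*(1 + w_2 + w_3) = 1 + z_2 + z_3$ is immediate from the classifying map $f\colon W \to B\SO_3$ of $P$. For $TW$, the homogeneous space identification $TW \cong P \times_{\SO_3}(\mathfrak{su}_3/\mathfrak{so}_3)$ reduces the computation to the $\SO_3$-representation $\mathfrak{su}_3/\mathfrak{so}_3$, which is canonically the space of traceless real symmetric $3 \times 3$ matrices under conjugation, i.e.\ the irreducible $5$-dimensional representation $V_5 = \mathrm{Sym}^2_0(V_3)$. A splitting-principle computation with formal roots $\alpha_1, \alpha_2, \alpha_3$ of $V_3$ satisfying $\sum \alpha_i = 0$ and elementary symmetric functions $w_2, w_3$ gives $w(\mathrm{Sym}^2 V_3) = \prod_{i \le j}(1 + \alpha_i + \alpha_j) = \prod_{i < j}(1 + \alpha_i + \alpha_j) = \prod_k (1 + \alpha_k) = 1 + w_2 + w_3$ in $\Z/2$ coefficients, where the diagonal terms $2 \alpha_i$ vanish mod $2$ and the identity $\alpha_i + \alpha_j = \alpha_k$ for $\{i,j,k\} = \{1,2,3\}$ follows from $\sum \alpha_i = 0$. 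Removing a trivial summand preserves the total class, so $w(V_5) = 1 + w_2 + w_3$, and pulling back along $f$ with $z_2^2 = 0$ yields $w(TW) = 1 + z_2 + z_3$.

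Combining these, $w_2(W) = z_2 = w_2(P)$ gives a spin-$\SU_8$ structure on $(W, \Phi(P))$, where $\Phi(P)$ is the $G_8$-bundle induced via $\Phi\colon \SO_3 \to G_8$. By \cref{pullback_SO3}, the characteristic class $d$ of $\Phi(P)$ evaluates to $f^*\Phi^*(d) = f^*(w_2 w_3) = z_2 z_3 \neq 0$, so $\int_W d = 1 \in \Z/2$ and $(W, \Phi(P))$ generates $\Omega_5^{\Spin\text{-}\SU_8}$. I anticipate the main obstacle to be the splitting-principle computation of $w(V_5)$, since mod-$2$ manipulations with formal roots are error-prone; as a cross-check, restricting to the maximal torus $\SO_2 \subset \SO_3$ decomposes $V_5$ as $\sigma_2 \oplus \sigma_1 \oplus \mathrm{triv}$ with total class $(1)(1+t)(1) = 1 + t$, consistent with the restriction of $1 + w_2 + w_3$ along $H^*(B\SO_3;\Z/2) \to H^*(B\SO_2;\Z/2)$.
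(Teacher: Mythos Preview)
Your proof is correct, but it diverges from the paper's in two interesting ways.

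First, you compute $H^*(W;\Z/2)$ via the Serre spectral sequence for $W \to B\SO_3 \to B\SU_3$, with $W$ as \emph{fiber}, whereas the paper uses $\SO_3 \to \SU_3 \to W$, with $W$ as \emph{base}. Your choice pays off immediately: by construction $z_i$ is the fiber restriction of $w_i$, so $w(P) = 1 + z_2 + z_3$ and the Steenrod squares $\Sq^1 z_2 = z_3$, $\Sq^2 z_3 = z_2 z_3$ follow at once from the Wu formula on $B\SO_3$ and naturality. The paper instead has to compare its spectral sequence with the one for the universal $\SO_3$-bundle to extract these facts. Your use of $c_2 \mapsto w_2^2$ to force $z_2^2 = 0$ is a clean substitute for the paper's analysis of the transgression.

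Second, for $w(TW)$ the paper argues intrinsically via Wu's theorem, reading off the Wu classes from the Steenrod action and Poincar\'e duality. You instead invoke the homogeneous-space identification $TW \cong P \times_{\SO_3} (\mathfrak{su}_3/\mathfrak{so}_3) \cong P \times_{\SO_3} \mathrm{Sym}^2_0(V_3)$ and compute $w(\mathrm{Sym}^2 V_3) = 1 + w_2 + w_3$ by the splitting principle. This is more hands-on and requires the representation-theoretic identification of $\mathfrak{su}_3/\mathfrak{so}_3$, which the paper in fact establishes later (in \S\ref{sec:EvalAnomaly}) for a different purpose; your approach thus anticipates that lemma and gives a unified picture in which both $w(P)$ and $w(TW)$ are pullbacks along the same classifying map. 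The paper's Wu-class argument is more economical here, but yours has the virtue of an independent torus cross-check.
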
 
The class $w_2(W)$ was computed by Barden~\cite[Lemma 1.1(v)]{Bar65}. The rest of $w(TM)$ was found by
Landweber-Stong (see~\cite[Appendix D]{LM89}), and the Steenrod squares in $H^*(W;\Z/2)$ by Calabi (see~\cite[\S
3]{Flo73}). Finally, Chen~\cite[\S 6]{Che17} computes $w(P)$.
\begin{proof}
Once we know the Steenrod squares are as claimed, the total Stiefel-Whitney class of $W$
follows from Wu's theorem as follows. The second Wu class $v_2$ is defined to be the Poincaré dual of the map
\begin{equation}
	x\mapsto \int_W \Sq^2(x)\colon H^3(W;\Z/2)\to H^5(W;\Z/2)\to\Z/2
\end{equation}
via the Poincaré duality identification $H^2(W;\Z/2)\cong (H^3(W;\Z/2))^\vee$. Wu's theorem shows that $v_2 = w_2 +
w_1^2$, so since $H^1(W;\Z/2) = 0$ (\cref{H2Wu}), $w_1 = 0$ and $w_2 = v_2$. Since $\Sq^2(z_3) = z_2z_3$, $w_2\ne
0$, so it must be $z_2$. Then $w_3 = \Sq^1(w_2) = z_3$; $w_4$ is trivial for degree reasons (\cref{wu_coh}); and
$w_5 = 0$ follows from the Wu formula calculating $\Sq^1(w_4)$.

Consider the Serre spectral sequence for the fiber bundle
\begin{equation}
\label{Wu_fiber}
\begin{tikzcd}
	{\SO_3} & {\SU_3} \\
	& W,
	\arrow[from=1-1, to=1-2]
	\arrow[from=1-2, to=2-2]
\end{tikzcd}
\end{equation}
which has signature
\begin{equation}
\label{mfld_serre}
	E_2^{*,*} = H^*(W; H^*(\SO_3;\Z/2)) \Longrightarrow H^*(\SU_3;\Z/2).
\end{equation}
A priori we must account for the action of $\pi_1(W)$ on $H^*(\SO_3;\Z/2)$, but by \cref{H2Wu}, $W$ is simply
connected; therefore we do not have to worry about this. As manifolds, $\SO_3\cong\RP^3$, so $H^*(\SO_3;\Z/2)\cong\Z/2[x]/(x^4)$. Also, $H^*(\SU_3;\Z/2)\cong \Z/2[b_2,
b_3]/(b_2^2, b_3^2)$, with $\abs{b_2} = 3$ and $\abs{b_3} = 5$~\cite[\S 8]{Bor54}, allowing us to draw the
$E_2$-page of the Serre spectral sequence~\eqref{mfld_serre}, which appears in~\cref{pullback_SWs}, left.

The fibration~\eqref{Wu_fiber} pulls back from the universal
$\SO_3$-bundle $\SO_3\to E\SO_3\to B\SO_3$ via the classifying map $f_P$ for $P$, inducing a map of Serre spectral
sequences that commutes with the differentials. We draw this map in \cref{pullback_SWs}. This map is an isomorphism on the line $E_2^{0, *}$, so $x\in
E_2^{0,1}(\SU_3)$ pulls back from the generator $x\in E_2^{0,1}(E\SO_3)$ --- and therefore $d_2(x) = z_2$ pulls
back from a class in $E_2^{2,0} = H^2(B\SO_3;\Z/2)$. The only nonzero class in that degree is $w_2$, so $f_P^*(w_2)
= z_2$, i.e.\ $w_2(P) = z_2$.

The Leibniz rule that in the Serre spectral sequence for $\SU_3$, $d_2(x^2) = 2xd_2(x) = 0$. But because
$H^2(\SU_2;\Z/2) = 0$, $x^2$ must support some nontrivial differential. Apart from $d_2$, the only option is the transgressing
$d_3\colon E_3^{0,2}\to E_3^{3,0}$, forcing $d_3(x^2) = z_3$. A similar argument in the Serre spectral sequence for
$E\SO_3$ shows that in that spectral sequence, $d_3(x^2) = w_3$; therefore $f_P^*(w_3) = z_3$ and $w_3(P) = z_3$.
Pullback commutes with Steenrod squares and $\Sq^1(w_2) = w_3$, so $\Sq^1(z_2) = z_3$. Finally, $f_P^*(w_2w_3) =
z_2z_3$, and the Wu formula implies $\Sq^2(w_3) = w_2w_3$, so $\Sq^2(z_3) = z_2z_3$. We have computed all the
Steenrod squares that could be nonzero for degree reasons, and along the way shown $w(P) = 1 + z_2 + z_3$: the
higher-degree Stiefel-Whitney classes of a principal $\SO_3$-bundle vanish.
\end{proof}

\begin{figure}[h!]
\includegraphics[width=.85\textwidth]{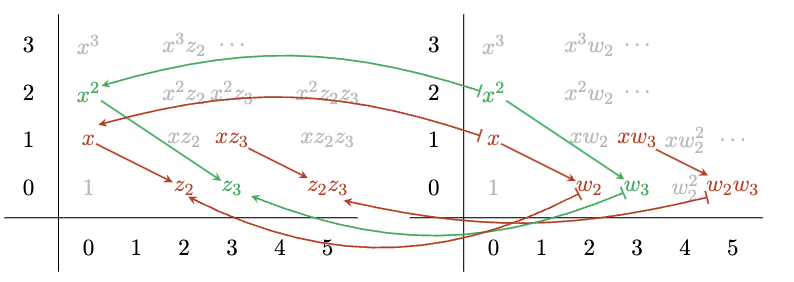}
\caption{
The fiber bundle $\SO_3\to\SU_3\to W$ pulls back from the universal $\SO_3$-bundle $\SO_3\to E\SO_3\to B\SO_3$, inducing a map of Serre spectral sequences. This map commutes with differentials and is the identity on $E_2^{0, \bullet} = H^*(\SO_3;\Z/2)$, allowing us to conclude that $\textcolor{BrickRed}{w_2}$ pulls back to $\textcolor{BrickRed}{z_2}$, $\textcolor{Green}{w_3}$ pulls back to $\textcolor{Green}{z_3}$, and $\textcolor{BrickRed}{w_2w_3}$ pulls back to $\textcolor{BrickRed}{z_2z_3}$. This is a picture proof of part of \cref{wu_facts}.}
\label{pullback_SWs}
\end{figure}

\section{Evaluating on the Anomaly}\label{sec:EvalAnomaly}
With the knowledge of the generating manifold for the $\bZ/2$ in degree 5 as the Wu manifold, we can consider evaluating the anomaly of the theory with the field content given in \S\ref{sec:BordismSetup}. Since $G_8$ acts trivially on the scalars and the graviton only the remaining three fields could have anomalies. 


\begin{defn}
The global anomaly for a fermion on a Riemannian manifold $M$  in a 
 a representation $\textit{R}$ coupled to background $G$ gauge field is given by an invertible field theory with partition function the exponential of an $\eta$-invariant of the Dirac operator, $\eta_{M,\textit{R}}(\mathcal{D}_{\text{Dirac}})$~\cite[Section 4.3]{Witten:2019bou}. 
\begin{itemize}
    \item For gauginos it is given by 
 $\alpha_{1/2} = \exp(\pi i \eta_{M,\textit{R}}(\mathcal{D}_{\text{Dirac}})/2)$ \cite{Witten:2015aba,FH21}.
 \item For gravitinos it is given by   $\alpha_{3/2} = \exp(\pi i \eta_{\text{gravitino}}/2)$ where  \begin{equation}\label{eq:gravitinio}
    \eta_{\text{gravitino}} = \eta_{M,\textit{R}}(\mathcal{D}_{\text{Dirac}\times TM})-2\eta_{M,\textit{R}}(\mathcal{D}_{\text{Dirac}})\,,
\end{equation}
and  $\eta_{M,\textit{R}}(\mathcal{D}_{\text{Dirac}\times TW})$ is the Dirac operator acting on the spinor bundle tensored with the tangent bundle.\footnote{Different sources in the literature give slightly different formulas for~\eqref{eq:gravitinio}, all of the form $\eta_{\text{gravitino}} = \eta_{M,\textit{R}}(\mathcal{D}_{\text{Dirac}\times TM})-k\eta_{M,\textit{R}}(\mathcal{D}_{\text{Dirac}})$, where $k$ is an even integer depending on the dimension of the theory. For a four-dimensional theory, $k = 2$: this is the unique value of $k$ consistent with perturbative anomaly cancellation for the U-duality symmetry as in~\cite{Mar85} and agrees with the value given in~\cite[\S 3.2]{FH21} following~\cite[Appendix A]{Freed:2002qp}.}
\end{itemize}
\end{defn}
For the remainder of the paper we will sometimes drop the $M$ subscript label when there is no confusion that we are considering the generating manifold. We also replace $TM$ by $TW$ where $W$ is the Wu manifold.
\begin{lem}\label{lem:sumofRep}
 If $\mathit{R} = \sum_{i} \mathit{R}_i$ then $\eta_{\sum_{i} \textit{R}_i}(\mathcal{D}_{\text{Dirac}})= \sum_i \eta_{\textit{R}_i}(\mathcal{D}_{\text{Dirac}})$.
\end{lem}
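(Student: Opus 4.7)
The plan is to reduce the lemma to the observation that the twisted Dirac operator associated to a direct sum of representations splits as a direct sum of twisted Dirac operators, so that the full spectrum is the disjoint union of the individual spectra, and then appeal to the fact that the $\eta$-invariant is defined by a regularised sum (or integral) over the spectrum and is therefore additive under disjoint unions of spectra.

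First I would unpack the definitions. Given a representation $R$ of $G$ and a principal $G$-bundle $P\to M$, let $V_R \coloneqq P\times_G R$ be the associated vector bundle, equipped with the connection induced from the background $G$-connection on $P$. Let $S\to M$ be the spinor bundle of $M$; the twisted Dirac operator $\mathcal{D}_R$ acts on sections of $S\otimes V_R$. If $R = \bigoplus_i R_i$, then there is a canonical isomorphism $V_R \cong \bigoplus_i V_{R_i}$ of Hermitian vector bundles with connection, and hence
\begin{equation}
    S\otimes V_R \;\cong\; \bigoplus_i S\otimes V_{R_i}.
\end{equation}
Because the Dirac operator depends linearly on the connection data used to build the coefficient bundle, with respect to this direct sum decomposition $\mathcal{D}_R$ is block-diagonal, i.e.\ $\mathcal{D}_R = \bigoplus_i \mathcal{D}_{R_i}$.

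Next I would pass to the spectral description. The operator $\mathcal{D}_R$ is essentially self-adjoint and elliptic, so it has a discrete real spectrum with finite multiplicities; the same holds for each $\mathcal{D}_{R_i}$. The block-diagonal decomposition implies that the spectrum of $\mathcal{D}_R$, with multiplicities, is the disjoint union of the spectra of the $\mathcal{D}_{R_i}$. Recall the definition
\begin{equation}
    \eta_R(\mathcal{D}) \;=\; \lim_{s\to 0} \sum_{\lambda\ne 0} \operatorname{sign}(\lambda)\,\lvert\lambda\rvert^{-s},
\end{equation}
where the sum is taken over the nonzero eigenvalues of $\mathcal{D}_R$ with multiplicity (and one adds the contribution of $\ker \mathcal{D}_R$ if one uses the convention $\eta(\mathcal{D}) + \dim\ker\mathcal{D}$). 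Because this is a sum indexed by eigenvalues, it is manifestly additive under disjoint union of spectra, so
\begin{equation}
    \eta_R(\mathcal{D}) \;=\; \sum_i \eta_{R_i}(\mathcal{D}),
\end{equation}
which is the claim.

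The only genuinely non-trivial point is that the analytic continuation defining the $\eta$-invariant interacts well with the finite sum over $i$; this is immediate since the sum has finitely many terms and each $\eta_{R_i}(\mathcal{D})$ is already known to admit a regular value at $s=0$ by the Atiyah--Patodi--Singer theorem. I would mention this in a single sentence rather than prove it in detail. Thus the proof reduces to the two elementary observations above: block-diagonality of $\mathcal{D}_R$ under direct sum of representations, and additivity of the $\eta$-series under disjoint union of spectra.
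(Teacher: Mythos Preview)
Your argument is correct and is the standard proof of additivity of the $\eta$-invariant under direct sums of coefficient bundles. Note, however, that the paper does not actually supply a proof of this lemma: it is stated without proof and used later on, so there is no ``paper's own proof'' to compare against. Your write-up would serve perfectly well as the omitted justification.
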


The anomaly for the vector boson is not given in terms of an $\eta$-invariant, but we assume that it is also an invertible theory, and we show that it also vanishes.
The next section is dedicated to showing:

\begin{thm}\label{prop:anomalyVanish}
The total anomaly (global and perturbative) of 4d $\mathcal{N}=8$ supergravity arising from the gaugino, vector boson, and gravitino, vanishes on the Wu manifold. 
\end{thm}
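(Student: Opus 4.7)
The plan is to evaluate each of the three anomaly contributions (gauginos, vector bosons, gravitinos) on the generator $(W,P)$ of $\Omega_5^{\Spin\text{-}\SU_8}\cong\Z/2$ from \cref{wu_facts} and verify that their combined class in $\Z/2$ is trivial. Since Marcus~\cite{Mar85} showed that the perturbative anomaly polynomial of $\cN=8$ supergravity vanishes, only the global anomaly survives; by reflection positivity the total anomaly on $(W,P)$ is a sign $\pm 1$, which must be shown to equal $+1$.

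The decisive simplification is that $P$ is induced from the $\SO_3$-bundle $\SU_3\to W$ via $\Phi\colon\SO_3\to G_8$, lifting to $\widetilde\Phi\colon\SU_2\to\SU_8$, $A\mapsto A^{\oplus 4}$. Equivalently, this embeds $\SU_2$ into $\SU_8$ as the first factor of $\SU_2\times\SU_4\hookrightarrow\SU_8$, $(A,B)\mapsto A\otimes B$, with $B=\mathrm{id}$. Hence every $\SU_8$-representation of interest pulls back to an $\SU_2$-representation by elementary means:
\begin{equation}
\boldsymbol 8 \mapsto 4\cdot\boldsymbol 2,\qquad
\boldsymbol{28}=\Lambda^2\boldsymbol 8 \mapsto 10\cdot\boldsymbol 1\oplus 6\cdot\boldsymbol 3,\qquad
\boldsymbol{56}=\Lambda^3\boldsymbol 8 \mapsto 4\cdot\boldsymbol 4\oplus 20\cdot\boldsymbol 2.
\end{equation}
The odd-dimensional $\SU_2$-summands descend to honest $\SO_3$-representations and give genuine bundles on $W$, while the even-dimensional summands yield bundles only after coupling to the twisted spinor bundle, using $w_2(W)=w_2(P)=z_2$ from \cref{wu_facts}. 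Applying \cref{lem:sumofRep} together with the gravitino formula~\eqref{eq:gravitinio}, the fermionic $\eta$-invariants split as
\begin{equation}
\eta_{1/2}=4\,\eta_{\boldsymbol 4}(\mathcal{D})+20\,\eta_{\boldsymbol 2}(\mathcal{D}),\qquad
\eta_{3/2}=4\,\eta_{\boldsymbol 2}(\mathcal{D}_{\mathrm{Dirac}\times TW}) - 8\,\eta_{\boldsymbol 2}(\mathcal{D}),
\end{equation}
reducing the fermion part to three $\eta$-invariants of $\SU_3$-equivariant Dirac operators on the symmetric space $W=\SU_3/\SO_3$, which can be computed by Peter--Weyl spectral analysis or extracted from existing Wu-manifold computations in the literature. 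Only residues of $\eta/2$ modulo $2$ are needed to pair against the $\Z/2$-valued generator.

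For the vector boson contribution --- which is not an $\eta$-invariant but is assumed to be a $\Z/2$-valued invertible field theory --- I would exploit the fact that $\boldsymbol{28}$ descends to a genuine $\SO_3$-representation and reduce its class on $(W,P)$ to the characteristic-class pairing $\int_W w_2(P)w_3(P)=\int_W z_2z_3$, which generates $H^5(W;\Z/2)$. The final step is to sum the three mod-$2$ residues and check that they cancel.

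The main obstacle will be the precise numerical evaluation of the three $\eta$-invariants on $W$: although each lives \emph{a priori} in $\R/\Z$, we only need residues of $\eta/2$ modulo $2$, and these can be pinned down either by direct $\SU_3$-equivariant index theory on $\SU_3/\SO_3$ or by careful citation of known results. A secondary subtlety is treating the vector boson anomaly rigorously without an explicit $\eta$-formula; this requires invoking the reflection-positive invertible TFT perspective and fixing the class by its value on the single generator $(W,P)$.
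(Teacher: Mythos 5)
Your setup --- reducing to the generator $(W,P)$, pulling everything back along $\widetilde\Phi\colon\SU_2\to\SU_8$, and branching $\boldsymbol 8$, $\boldsymbol{28}$, $\boldsymbol{56}$ into $\SU_2$-representations --- matches the paper, and your decompositions agree with its equations \eqref{eq:branching}, \eqref{eq:vectorbranching}, \eqref{eq:fermionbranching}. But the decisive step is missing. You defer the proof to a ``precise numerical evaluation of the three $\eta$-invariants'' on $\SU_3/\SO_3$ by Peter--Weyl analysis or literature citation, i.e.\ you never actually show the sign is $+1$. The paper never computes a single $\eta$-invariant: since the perturbative anomaly vanishes, $\exp(\pi i\eta_R/2)$ is a bordism invariant valued in a $\Z/2$ group, and every irreducible $\SU_2$-summand in your own branchings occurs with \emph{even} multiplicity (indeed $\boldsymbol{56}\to 2(10\cdot\boldsymbol 2+2\cdot\boldsymbol 4)$, $\boldsymbol{28}\to 2(3\cdot\boldsymbol 3+5\cdot\boldsymbol 1)$, $\boldsymbol 8\to 4\cdot\boldsymbol 2$). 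By \cref{lem:sumofRep} each sector's invariant is therefore (at least) twice another bordism invariant, hence zero in $\Z/2$ --- each contribution dies separately, and no cancellation between sectors or numerical input is needed. Your plan instead leaves three unknown residues and hopes they ``cancel,'' which is exactly the content you were supposed to establish.

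Two further gaps. For the gravitino you write $\eta_{3/2}=4\,\eta_{\boldsymbol 2}(\mathcal{D}_{\mathrm{Dirac}\times TW})-8\,\eta_{\boldsymbol 2}(\mathcal{D})$ but give no way to handle the first term; the paper's extra ingredient is the identification $TW\cong \SU_3\times_{\SO_3}(\su_3/\mathfrak{so}_3)$ with $\su_3/\mathfrak{so}_3\cong\boldsymbol 1+\boldsymbol 1+\boldsymbol 3$ as an $\SO_3$-representation, which converts $\mathcal{D}_{\mathrm{Dirac}\times TW}$ into Dirac operators coupled to genuine representations and reduces \eqref{eq:gravitinio} to multiples of $\eta_{\boldsymbol 2}(\mathcal{D})$, again with multiplicity divisible by $4$. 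For the vector boson, reducing its class to $\int_W w_2(P)w_3(P)=\int_W z_2z_3$ is not a proof of vanishing --- that pairing is \emph{nonzero} on $(W,P)$; the vanishing comes precisely from the factor of $2$ in the branching of $\boldsymbol{28}$ (the relevant group being $\Omega_5^{\SO}(B\SO_3)\cong\Z/2\times\Z/2$, detected by $\int w_2w_3$ of $TM$ and of $P$, with the spin-type coupling killing one factor), so without invoking the even multiplicity your step would point toward a nontrivial contribution rather than a trivial one.
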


\subsection{Evaluating on the Wu manifold}
The full anomaly denoted by $\alpha$ can be written  schematically as 
\begin{equation}\label{eq:totalanomaly}
    ``\alpha = \alpha^{\text{pert}}_{1/2} \otimes  \alpha^{\text{pert}}_{1} \otimes  \alpha^{\text{pert}}_{3/2} \otimes  \alpha^{\text{np}}_{1/2} \otimes  \alpha^{\text{np}}_{1}\otimes  \alpha^{\text{np}}_{3/2}\,''
\end{equation}
where we have split up each part of the perturbative and nonperturbative anomaly coming from the gaugino, vector boson, and gravitino. Technically speaking, separating the anomaly in this way  is not something that can be done canonically. By \eqref{eq:SESanomaly} the nontopological part arises as a quotient of the invertible theory by the topological theories.
We write the anomaly in such a way in order to make it organizationally more clear. 

The vector bosons can be defined without choosing a spin structure, and therefore the partition function of their anomaly field theory factors through the quotient by fermion parity. That is, the tangential structure is
    \begin{equation}
        \SO\times G_8 = (\Spin\text{-}\SU_8)/\set{\pm 1}\,.
    \end{equation}
We will proceed in understanding the perturbative anomalies by isolating $\alpha^{\text{pert}}_{1}$.
\begin{lem}
The perturbative anomaly for the vector bosons independently vanishes.
\end{lem}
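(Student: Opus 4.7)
The plan is to realize the perturbative part of the vector boson anomaly as a rational degree-$6$ characteristic class on $\SO\times G_8$-manifolds, and then show this class must vanish because a $1$-form gauge field carries no chirality.

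First I would pin down the target. Since the $28$ vector bosons are bosonic and require no spin structure, the tangential structure for their anomaly field theory is $\SO\times G_8$, as the lemma notes. By the short exact sequence~\eqref{eq:SESanomaly}, the perturbative (free) summand lives in $\Hom(\Omega_6^{\SO}(BG_8),\Z)$, which rationally is detected by $H^6(B(\SO\times G_8);\Q)$ via the Atiyah-Hirzebruch spectral sequence. Since $H^i(B\SO;\Q)=0$ for $i\leq 3$, and $H^*(BG_8;\Q)\cong H^*(B\SU_8;\Q) = \Q[c_2,\dots,c_8]$ (the same argument as for $p$-torsion at odd $p$ also kills $p=2$ rationally), the only degree-$6$ generator is the pullback of $c_3$ from $BG_8$. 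Thus $\mathcal{A}^{\text{pert}}_1$ is pinned down by a single rational coefficient.

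Next I would argue this coefficient is zero. The perturbative anomaly of a $4$d field coupled to a background $G_8$-bundle is computed as the integral of the index density of a chirality-graded elliptic operator: the chiral Dirac operator for fermions in representation $R$ contributes $\hat{A}(M)\,\mathrm{ch}(R)$ in degree $6$, and chiral tensors contribute the analogous signature-type density. A $1$-form gauge field, however, has kinetic operator $d^*d + dd^*$, which is ungraded, so the associated index density vanishes identically. Equivalently, the triangle anomaly is a fermion-loop computation, and $28$ bosonic $1$-forms produce no contribution at this level, so no multiple of $c_3$ can appear. Hence $\mathcal{A}^{\text{pert}}_1 = 0$.

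The main subtlety, and the reason this must be phrased as a lemma rather than treated as a triviality, is that the decomposition~\eqref{eq:totalanomaly} of the total anomaly into gaugino, vector boson, and gravitino contributions is not canonical; one must first fix the natural splitting coming from the free-field content of the supergravity multiplet. With that choice in place the vanishing of $\mathcal{A}^{\text{pert}}_1$ is immediate from the arguments above, and it is consistent with the vanishing of the \emph{total} perturbative anomaly established in~\cite{Mar85,BHN10}, which will in turn let us conclude below that $\mathcal{A}^{\text{pert}}_{1/2}$ and $\mathcal{A}^{\text{pert}}_{3/2}$ cancel between themselves.
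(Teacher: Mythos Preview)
Your approach differs from the paper's in a meaningful way. The paper does not work with $G_8$ at all here: it first restricts the $G_8$-bundle to the $\SO_3$ subgroup via the map $\Phi$ (justified because the Wu manifold's bundle factors through $\SO_3$), and then runs the Atiyah--Hirzebruch spectral sequence to show that $\Omega_6^{\SO}(B\SO_3)\otimes\Q=0$. Since $H_*(B\SO_3;\Q)$ and $\Omega_*^{\SO}\otimes\Q$ are both concentrated in degrees divisible by $4$, there is simply nothing in total degree $6$. This is a purely topological argument: the \emph{group} of possible perturbative anomalies vanishes, so the vector boson anomaly must vanish regardless of what form it takes. The paper never needs to say what the anomaly of a $1$-form field actually is, which is consistent with its explicit disclaimer that the vector boson anomaly is not given by an $\eta$-invariant.

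You instead keep the full $G_8$, correctly identify that $\Omega_6^{\SO}(BG_8)\otimes\Q$ is one-dimensional on $c_3$, and then invoke the physics fact that a non-chiral $1$-form contributes nothing to the triangle anomaly to set the coefficient to zero. This is standard and would be accepted in most physics contexts, and it nominally proves more (vanishing for all $G_8$-bundles, not only those induced from $\SO_3$). But it is an extra input that the paper deliberately sidesteps: by shrinking the target group to zero, the paper's argument is insensitive to the precise mechanism of the bosonic anomaly, whereas yours depends on it. If you want to match the paper's level of rigor, the cleanest fix is to imitate its move and restrict to $\SO_3$ before computing; then your first paragraph already finishes the job without needing the chirality argument at all.
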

\begin{proof}
With the knowledge that the manifold generator for the anomaly is the Wu manifold, we will further restrict to the $\SO_3$ inside of $G_8$; we are left to computing $\Omega^{\SO}_6(B\SO_3) \otimes \mathbb{Q}$\,, which isolates the free summand. For the degree we are after, we can compute the bordism group via the AHSS. We take the $E^2$ page of
\begin{equation}
    E_{p,q}^2 = H_p(B\SO_3,\Omega^{\SO}_q(\pt)) \Longrightarrow \Omega^{\SO}_6(B\SO_3)\,,
\end{equation}
where 
\begin{equation}\label{eq:SObordism}
    \Omega_{*}^{\SO}(\pt) = \{\,\bZ\,,\,0\,,\,0\,,\,0\,,\bZ\,,\, \bZ/2\,,\,0\,, \ldots\, \}\,,
\end{equation}
and tensor with $\mathbb{Q}$. This will kill the torsion parts that arise on the line $q=5$, and the result is equivalent to the $E_\infty$ page, as all differentials vanish. The $E_\infty$ page of $\Omega^{\SO}_6(B\SO_3) \otimes \mathbb{Q}$ is given by 
\begin{gather}
\begin{array}{c|ccccccccc}
    6& 0\\
    5& 0 &0\\
     4 & \mathbb{Q}&0& 0& 0&\mathbb{Q} &0 \\
     3& 0 &0&0 &0&0&0\\
     2& 0 & 0&0& 0 & 0& 0 \\
     1& 0 &0&0& 0&0&0\\
     0 & \mathbb{Q}& 0 & 0 & 0 & \mathbb{Q} & 0 & 0\\
     \hline 
     & 0 & 1 & 2 & 3 & 4 & 5&6 \,
\end{array}
\end{gather}
We see that the perturbative anomaly of the vector boson vanishes.  
\end{proof}

\begin{cor}
\label{spin_pertur}
The perturbative anomalies from the fractional spin particles vanish on their own.
\end{cor}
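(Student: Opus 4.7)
The plan is to combine the immediately preceding Lemma with the classical result of Marcus \cite{Mar85,BHN10}, cited a few lines above the corollary, that the entire perturbative anomaly of 4d $\mathcal N = 8$ supergravity vanishes. Concretely, I would invoke the short exact sequence~\eqref{eq:SESanomaly}: perturbative anomalies are classified by the free abelian group $\Hom(\Omega_6^{\Spin\text{-}\SU_8}, \Z)$, in which anomaly polynomials are genuinely additive, because the total anomaly polynomial of a direct sum of fields is the sum of the individual anomaly polynomials (characteristic classes add over direct sums of representations).

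Given this additivity, the schematic decomposition in~\eqref{eq:totalanomaly} becomes, at the level of the quotient $\Hom(\Omega_6^{\Spin\text{-}\SU_8}, \Z)$, a genuine equation
\begin{equation}
    \mathcal{A}^{\text{pert}}_{1/2} + \mathcal{A}^{\text{pert}}_{1} + \mathcal{A}^{\text{pert}}_{3/2} = 0.
\end{equation}
Marcus's theorem supplies the vanishing of the left-hand side, and the preceding Lemma supplies $\mathcal{A}^{\text{pert}}_{1} = 0$. Rearranging in this free abelian group yields $\mathcal{A}^{\text{pert}}_{1/2} + \mathcal{A}^{\text{pert}}_{3/2} = 0$, which is precisely the statement that the fractional spin fields' perturbative anomaly vanishes without needing to cancel against the vector boson contribution.

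There is essentially no obstacle: the corollary is pure bookkeeping once the preceding Lemma is in hand, and the only subtle point worth flagging explicitly is that we are working in the free quotient $\Hom(\Omega_6^{\Spin\text{-}\SU_8}, \Z)$ of the full anomaly group, where subtraction is unambiguous and additivity of anomaly polynomials over field content is automatic. I would therefore write the proof as a single sentence citing Marcus together with the Lemma, possibly with a parenthetical remark reminding the reader that perturbative anomalies add in the sense of~\eqref{eq:SESanomaly}.
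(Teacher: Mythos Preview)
Your proposal is correct and matches the paper's intended argument: the paper treats the corollary as an immediate consequence of the preceding Lemma together with Marcus's vanishing result, and you have simply made explicit the subtraction in the free abelian group $\Hom(\Omega_6^{\Spin\text{-}\SU_8},\Z)$ that the paper leaves implicit.
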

Having established this corollary, we may now pullback the anomaly in \eqref{eq:totalanomaly} to the nonperturbative part, and the equation becomes literally true.

The $\eta$-invariant for the contributions in $\alpha^{\text{np}}_{1/2}\otimes  \alpha^{\text{np}}_{3/2}$ is therefore a bordism invariant.
\begin{lem}
The spin-$\SU_8$ bordism invariant $\alpha_{1/2}^{\mathrm{np}}\otimes\alpha_{3/2}^{\mathrm{np}}\colon\Omega_5^{\Spin\text{-}\SU_8}\to\C^\times$ vanishes.
\end{lem}
\begin{proof}
It suffices to check on a generating set for $\Omega_5^{\Spin\text{-}\SU_8}$, and in \S\ref{mf_gen} we showed that the Wu manifold $W$ generates. In particular, the spin-$\SU_8$ structure on $W$ is induced from a spin-$\SU_2$ structure, so the $\eta$-invariants that make up the partition functions of $\alpha_{1/2}^{\mathrm{np}}$ and $\alpha_{3/2}^{\mathrm{np}}$ can be computed by pulling back representations from $\SU_8$ to $\SU_2$, then evaluating the corresponding $\SU_2$ $\eta$-invariants on $W$.

We will show that there are representations $V_1$ and $V_2$ of $\SU_2$ such that $\alpha_{1/2}^{\mathrm{np}}(W) = \eta_{W, 2V_1}(\mathcal D_{\mathrm{Dirac})}$ and $\alpha_{3/2}^{\mathrm{np}}(W) = \eta_{W, 2V_2}(\mathcal D_{\mathrm{Dirac})}$. By \cref{lem:sumofRep}, this means
\begin{equation}
\label{its_a_square}
    (\alpha_{1/2}^{\mathrm{np}}\otimes\alpha_{3/2}^{\mathrm{np}})(W) = \exp\left(\frac{i\pi}{2} \eta_{W, V_1\oplus V_2}(\mathcal D_{\mathrm{Dirac}})\right)^2.
\end{equation}
The quantity $\exp((i\pi/2) \eta_{W, V_1\oplus V_2}(\mathcal D_{\mathrm{Dirac}}))$ is a bordism invariant of spin-$\SU_2$ manifolds: its perturbative part (i.e.\ its image in $\Hom(\Omega_6^{\Spin\text{-}\SU_2}, \Z)$) vanishes, because that invariant is $1/2$ of the perturbative part of $\alpha_{1/2}\otimes\alpha_{3/2}$, which we showed vanishes in \cref{spin_pertur} (and is $\Z$-valued, so $1/2$ of it is uniquely defined if it exists). The equation $2[W] = 0$ in $\Omega_5^{\Spin\text{-}\SU_8}$ pulls back to imply $2[W] = 0$ in $\Omega_5^{\Spin\text{-}\SU_2}$,\footnote{Alternatively, the equation $2[W] = 0$ in spin-$\SU_2$ bordism follows from calculations of Freed-Hopkins~\cite[Figure 5, case $s = 4$]{FH21a}.} so since $\alpha_{1/2}^{\mathrm{np}}\otimes\alpha_{3/2}^{\mathrm{np}}(W)$ is the square of another spin-$\SU_2$ bordism invariant on $W$ by~\eqref{its_a_square}, we conclude $\alpha_{1/2}^{\mathrm{np}}\otimes\alpha_{3/2}^{\mathrm{np}}(W) = 0$.

The rest of this proof is devoted to finding $V_1$ and $V_2$.
%
We consider how  $\textbf{56}$, $\textbf{28}$, and $\textbf{8}$ split via our fourfold embedding of $\SU_2$ into $G_8$ for the Wu manifold. We see that $\textbf{56}$ gives the dimension of the alternating three forms in 8-dimensions, $\textbf{28}$ the dimension of alternating two forms, and $\textbf{8}$ is the defining representation. 
 The branchings are given by
 \begin{align}\label{eq:fermionbranching}
      \textbf{56} &\rightarrow  2(10 \times \textbf{2}+2\times \textbf{4}), \\ \label{eq:vectorbranching}
      \textbf{28} &\rightarrow  2(3 \times \textbf{3}+5\times \textbf{1}),\\
      \textbf{8} &\rightarrow 4 \times \textbf{2}\,, \label{eq:branching}
 \end{align}
  where the right hand side is in terms of $\mathfrak{su_2}$ representations. In increasing numerical order, they are the trivial, defining, adjoint, and $\mathrm{Sym}^4$ representation.
  To show this, notice that $\boldsymbol 8$ splits as $V^{\oplus 4}$ when we pull back to $\SU_2$, where $V = \boldsymbol 2$.
  We then consider the ways of splitting the alternating three forms. This can be done as 
  \begin{equation}
      \Lambda^2 V \otimes \Lambda^1 V \otimes \Lambda^0 V \otimes \Lambda^0 V = \mathbb{C} \otimes V \otimes \mathbb{C} \otimes \mathbb{C}\,
  \end{equation}
  in 12 ways, essentially partitioning $3$ into a sum of length 2. The $\mathbb{C}$ for both $\Lambda^2 V$ and $\Lambda^0 V$ show that they are isomorphic as representations. It can also split into 
  \begin{equation}
      \Lambda^1 V \otimes \Lambda^1 V \otimes \Lambda^1 V \otimes \Lambda^0 V = V \otimes V \otimes V \otimes \mathbb{C}
  \end{equation}
  in 4 ways. The fact that the third tensor product of the defining representation decomposes as $\textbf{2} \otimes \textbf{2} \otimes \textbf{2} = \textbf{2}+\textbf{2}+\textbf{4}$, gives us \eqref{eq:fermionbranching}.
  Similarly, the two forms can be split into 
  \begin{equation}
      \Lambda^2 V \otimes \Lambda^0 V \otimes \Lambda^0 V \otimes \Lambda^0 V \quad \text{and} \quad  \Lambda^1 V \otimes \Lambda^1 V \otimes \Lambda^0 V \otimes \Lambda^0 V
  \end{equation}
  in 4 ways and 6 ways, respectively. The fact that $\textbf{2} \otimes \textbf{2}  = \textbf{1} +\textbf{3}$, establishes \eqref{eq:vectorbranching}.
  
  
  As a spin 3/2 particle, the gravitino contains a spinor index as well as a Lorentz index, 
therefore in order to use \eqref{eq:gravitinio} for the anomaly, we need to use the fact that the tangent bundle of the Wu manifold is an associated bundle.
\begin{lem}
The tangent bundle of the Wu manifold $W$ is given by 
\begin{equation*}
    TW=\SU_3 \times _{\SO_3}\frac{\su_3}{\mathfrak{so}_3}\,.
\end{equation*}
\end{lem}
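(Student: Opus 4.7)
This is the standard identification of the tangent bundle of a homogeneous space, so the plan is to invoke the general fact that for any Lie group $G$ with closed subgroup $H$, there is a canonical isomorphism of vector bundles $T(G/H) \cong G \times_H (\mathfrak{g}/\mathfrak{h})$, where $H$ acts on $\mathfrak{g}/\mathfrak{h}$ via the adjoint representation (which preserves $\mathfrak{h}$ because $H$ normalizes its own Lie algebra). Specializing to $G = \SU_3$ and $H = \SO_3$ then yields the claim.

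To construct the isomorphism, I would start from the observation that $\pi \colon G \to G/H$ is a principal $H$-bundle. The differential $d\pi_e \colon \mathfrak{g} \to T_{eH}(G/H)$ is surjective with kernel $\mathfrak{h}$, giving a canonical linear isomorphism $\overline{d\pi_e} \colon \mathfrak{g}/\mathfrak{h} \overset{\cong}{\to} T_{eH}(G/H)$. Since left translation $L_g$ acts transitively on $G/H$, pushing $\overline{d\pi_e}$ forward by $(dL_g)_{eH}$ identifies $\mathfrak{g}/\mathfrak{h}$ with $T_{gH}(G/H)$ for each $g$. Assembling these fiberwise gives a smooth map
\begin{equation}
\Phi \colon G \times (\mathfrak{g}/\mathfrak{h}) \longrightarrow T(G/H), \qquad (g, [X]) \longmapsto (dL_g)_{eH}\, \overline{d\pi_e}([X]).
\end{equation}

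The next step is to check that $\Phi$ factors through the associated bundle $G \times_H (\mathfrak{g}/\mathfrak{h})$, i.e.\ that $\Phi(gh, \mathrm{Ad}(h^{-1})[X]) = \Phi(g, [X])$ for all $h \in H$. This reduces to the identity
\begin{equation}
\overline{d\pi_e} \circ \mathrm{Ad}(h^{-1}) = (dL_{h^{-1}})_{eH} \circ \overline{d\pi_e},
\end{equation}
which follows from differentiating $\pi(h^{-1} \exp(tX) h) = L_{h^{-1}} \pi(\exp(tX))$ at $t=0$ (using $hH = H$). The resulting bundle map $G \times_H (\mathfrak{g}/\mathfrak{h}) \to T(G/H)$ covers the identity on $G/H$ and is a linear isomorphism on each fiber by construction, hence a vector bundle isomorphism.

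There is no real obstacle here; this is classical material, and the only point requiring a moment of care is the equivariance computation above. Specializing to $G = \SU_3$ and $H = \SO_3$ and noting that $W = \SU_3/\SO_3$ completes the proof.
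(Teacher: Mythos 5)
Your proof is correct and follows essentially the same route as the paper: identify $W$ as the homogeneous space $\SU_3/\SO_3$ and apply the canonical isomorphism $T(G/H)\cong G\times_H(\mathfrak g/\mathfrak h)$. The only difference is that you verify this general isomorphism (including the $\mathrm{Ad}$-equivariance check) explicitly, whereas the paper cites it from the literature; the specialization to $G=\SU_3$, $H=\SO_3$ is identical.
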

\begin{proof}
The fact that the Wu manifold is a homogeneous space allows us to use the following general procedure to construct its tangent bundle.
For $H\subset G$ a closed subgroup of a Lie group $G$, we have the following exact sequence of adjoint representations of $H$:
\begin{equation}
	\shortexact*{{\mathfrak{h} }}{\mathfrak{g}}{\mathfrak{g/h}}.
\end{equation}
The canonical principal $H$-bundle $H\rightarrow G/H$ gives an exact functor from representations of $H$ to vector bundles over $G/H$. This gives a corresponding sequence of vector bundles:
\begin{equation}
	\shortexact*{G\times_{H}{\mathfrak{h} }}{G\times_{H}\mathfrak{g}}{G\times_{H}\mathfrak{g/h}}.
\end{equation}
There is an isomorphism $G\times_{H}\mathfrak{g/h} \rightarrow T(G/H)$~\cite[\S 7.4]{BH58}. Let $p\colon G \to G/H$ and $L_X$ be the left invariant vector field generated by $X\in \mathfrak{h}$.
Then the mapping of $(g, X+\mathfrak{h})\in G \times (\mathfrak{g/h})$ to $T_g p \cdot L_X(g) \in T_{gH}(G/H)$ gives the isomorphism. 
Specifically for our problem, we have the $\SO_3$-bundle $\SU_3 \rightarrow W$, which by the present construction gives the desired result.
\end{proof}
\begin{rem}
This is an example of the ``mixing construction": for a principal $G$-bundle $P \rightarrow M$ and a $G$-representation $V$, the space $P \times_G V$
is a vector bundle over $M$ with rank equal to the dimension of $V$.
\end{rem}
Back to $\eta_{W,R}(\mathcal{D}_{\text{Dirac}\times TW})$ in $\eta_{\text{gravitino}}$. In terms of 
$\eta_{R}(\mathcal{D}_{\text{Dirac}})$
this term in~\eqref{eq:gravitinio} unpacks to
\begin{equation}
\eta_{R}(\mathcal{D}_{\text{Dirac}\times TW}) =
    \eta_{W, TW\otimes \boldsymbol 8}(\mathcal D_{\mathrm{Dirac}})\,.
\end{equation}
 We want to show this $\eta$-invariant is for a representation of the form $V_2\oplus V_2$. We therefore need to study the representation $TW\otimes \boldsymbol{8}$. Since $TW$ is the associated bundle to $\su_3/\mathfrak{so}_3$, $\eta_{W, TW\otimes \boldsymbol 8}(\mathcal D_{\mathrm{Dirac}})$ is the $\eta$-invariant for the Dirac operator associated to the $\SO_3$-representation $(\su_3/\mathfrak{so}_3)\otimes\boldsymbol 8$. The $\su_8$-representation $\boldsymbol 8$ factors as $(\boldsymbol 2\oplus \boldsymbol 2)\oplus (\boldsymbol 2\oplus \boldsymbol 2)$ as an $\su_2$-representation by~\eqref{eq:branching}, so we can set $V_2\coloneqq (\su_3/\mathfrak{so}_3)\otimes (\boldsymbol 2\oplus \boldsymbol 2)$ and conclude.
\end{proof}

We now move onto the nonperturbative anomaly from the vector bosons, which is accessible from $\Omega^{\SO}_{5}(B\SO_3)$. By applying \eqref{eq:SObordism} to the AHSS, we only need to consider $H_{5}(B\SO_3,\bZ)$ as well as the $\bZ/2$ element in bidegree $(0,5)$. 
One can evaluate the torsion part of $H_{5}(B\SO_3;\bZ)$ by the universal coefficient theorem; this will involve looking at $H^6(B\SO_3;\bZ)$. We find that this is given by $w_2 w_3$ of the $\SO_3$ bundle and is nontrivial on the Wu manifold. 
Then the AHSS says $\Omega_5^\SO(B\SO_3) = \Z/2\times\Z/2$ detected by the bordism invariants $\int w_2(TM)w_3(TM)$ and $\int w_2(P)w_3(P)$; these are generated by $W$ with the trivial bundle, and $W$ with the principal $\SO_3$-bundle.
We see that while the bosonic anomaly is in principle $\bZ/2 \times \bZ/2$ valued, and coupling to spin structure eliminates one of the $\bZ/2$. Using \eqref{eq:vectorbranching}, for the representation of the vector boson, the anomaly is also twice of something as a bordism invariant. This is reasonable since the anomaly of multiple particles is the tensor product of their anomalies
\footnote{For the gaugino and gravitino we could employ the decomposition of representations directly to the $\eta$-invariant. 
In the case of the vector boson, we use the fact that direct sums of representations goes to tensor products of anomalies.}.
The anomaly for the vector bosons is 2 times something as a bordism invariant, since the perturbative part vanished, and considering that we have argued that everything else in \eqref{eq:totalanomaly} vanishes aside from $\alpha^{\text{np}}_{1}$,  we have that $\alpha = \alpha^{\text{np}}_{1}$. But $\alpha$ is $\bZ/2$ valued, and with $\alpha^{\text{np}}_{1}$ equating to 0 mod 2, the full anomaly vanishes, thus establishing \cref{prop:anomalyVanish}.

\bibliographystyle{alpha}
\bibliography{references}

\end{document}